\documentclass[letterpaper, 10 pt, conference]{ieeeconf}  

\IEEEoverridecommandlockouts
\overrideIEEEmargins                                      
\linespread{0.89}

\usepackage{verbatim}
\usepackage{epsfig}
\usepackage{amsmath}
\usepackage{amsthm}
\usepackage{amssymb}
\usepackage{psfrag}
\usepackage[T1]{fontenc}
\usepackage[ruled]{algorithm2e}
\usepackage{mathtools, nccmath}
\usepackage{cuted}
\usepackage{multirow}
\usepackage{etoolbox}
\AfterEndEnvironment{strip}{\leavevmode}
\usepackage{url}
\usepackage{dsfont}
\usepackage{mathtools}

\bibliographystyle{IEEEtran}
\usepackage[usenames,dvipsnames]{pstricks}
\usepackage[ruled]{algorithm2e}
\usepackage{etoolbox}
\usepackage{subfig}
\usepackage{graphicx}
\usepackage{lipsum}
\usepackage{bbm}
\usepackage[font=footnotesize,skip=0pt]{caption}
\makeatletter
\patchcmd{\@makecaption}
  {\scshape}
  {}
  {}
  {}
\makeatother
\usepackage[noadjust]{cite}


\newtheorem{theorem}{Theorem}[]
\newtheorem{lemma}{Lemma}[]

\newtheorem{proposition}{Proposition}

\DeclareMathOperator*{\argmax}{argmax}
\DeclareMathOperator*{\argmin}{argmin}

\definecolor{color1}{rgb}{0,0,0}
\usepackage{textcomp}
\usepackage{xcolor}
\def\BibTeX{{\rm B\kern-.05em{\sc i\kern-.025em b}\kern-.08em
    T\kern-.1667em\lower.7ex\hbox{E}\kern-.125emX}}
\usepackage[normalem]{ulem}
\allowdisplaybreaks
\title{\LARGE \bf
Guaranteeing Service in Connected Microgrids: Storage Planning and Optimal Power Sharing Policy}

\author{Arnab Dey$^{1}$, Vivek Khatana$^{1}$, Ankur Mani$^{2}$  and Murti V. Salapaka$^{1}$
\thanks{This work is supported by Advanced Research Projects Agency-Energy OPEN through the project titled "Rapidly Viable Sustained Grid" via grant no. DE-AR0001016.}
\thanks{The authors also acknowledge US Department of Energy for supporting this research through the project titled "CyDERMS: Center for Cybersecurity \& Resiliency of Distribution Energy Resources (DERs) and Microgrids-integrated Distribution Systems" via award no. DE-CR0000040.}
\thanks{$^{1}$ Arnab Dey \{{\tt\small dey00011@umn.edu}\}, Vivek Khatana \{{\tt\small khata010@umn.edu}\}, Murti V. Salapaka\{{\tt\small murtis@umn.edu}\} are with Department of Electrical and Computer Engineering, University of Minnesota, Twin Cities, USA, and $^{2}$ Ankur Mani \{{\tt\small amani@umn.edu\}} is with Department of Department of Industrial and Systems Engineering, University of Minnesota, Twin Cities, USA,
}
}
\begin{document}
\maketitle
\thispagestyle{empty}
\pagestyle{empty}
\begin{abstract}
The integration of renewable energy sources (RES) into power distribution grids poses challenges to system reliability due to the inherent uncertainty in their power production. To address this issue, battery energy sources (BESs) are being increasingly used as a promising solution to counter the uncertainty associated with RES power production. During the overall system planning stage, the optimal capacity of the BES has to be decided. In the operational phase, policies on when to charge the BESs and when to use them to support loads must be determined so that the BES remains within its operating range, avoiding depletion of charge on one hand and remaining within acceptable margins of maximum charge on the other. In this paper, a stochastic control framework is used to determine battery capacity, for microgrids, which ensures that during the operational phase, BESs’ operating range is respected with pre-specified high probability. We provide an explicit analytical expression of the required BESs energy capacity for a single microgrid with RES as the main power source. Leveraging insights from the single microgrid case, the article focuses on the design and planning of BESs for the two-microgrid scenario. In this setting, microgrids are allowed to share power while respecting the capacity constraints imposed by the power lines. We characterize the optimal power transfer policy between the microgrids and the optimal BES capacity for multiple microgrids. This provides the BES savings arising from connecting the microgrids. 

\textit{Index Terms:}
Battery sizing, Brownian motion, Stochastic optimization, Renewable, Uncertainty
\end{abstract}
\section{Introduction}

In recent years, there has been a tremendous growth in the penetration level of renewable energy sources (RES) in the traditional power distribution networks to address environmental issues, such as carbon emission, and techno-economic issues, such as conventional energy resource depletion, and increasing consumer demand. It is projected that RES will contribute $90$\% of the total energy production by 2050, and approximately 1 trillion USD per year will be invested in the development of renewable energy sectors \cite{irena2019energytransformation}. However, despite the environmental benefits renewable penetration poses a major challenge to the reliable operation of the distribution system as RES are inherently uncertain. Battery energy sources (BESs), provide an attractive option to counter RES uncertainty with the ability to store excess energy when supply exceeds demand and to provide energy when demand exceeds RES-based energy. A key challenge is to determine the capacity of the BES and the associated strategy of when to charge the battery and when to use it to service loads. The problem is difficult because of the stochasticity of RES. Moreover, in the multi-microgrid scenario the facility to share power has the potential to provide benefits of spatial diversity for minimizing needed BES capacity. However, the related problem for determining capacity and charging/discharging policy incorporating the power-sharing ability while respecting constraints of power lines and ensuring BES charge state within specified limits is formidable \cite{hannan2020review}. 
Note that, while BES capacity governs the resource installation cost of the system, the flow capacity of the lines determines the line installation cost. Thus, a proper characterization of both is extremely important to efficiently determine the total installation cost of the system. In this article, we describe a methodology to determine the BES capacity and power-sharing policy respecting the flow capacity.
Summarizing, two tasks posed by RES penetration into the grid are addressed in this article: 

(i) How much BES energy capacity each microgrid (MG) should
have to ensure the BESs are operated within the specified operating range at all times?

(ii) What should be the power-sharing policy amongst the microgrids?


In prior state of the art, the problem of BESs size determination is approached with two primary objectives: (1) operational cost minimization to address techno-economic goals, (2) enhancement of system reliability. Most of the existing works focus on BESs capacity determination with the aim of optimizing installation and operation costs. In this context, several strategies are proposed to find the optimal battery size for both grid-connected and grid-isolated systems \cite{zolfaghari2019optimal,khorramdel2015optimal, li2018optimal}. In \cite{zolfaghari2019optimal,khorramdel2015optimal,chen2011sizing}, battery sizing strategies, based on cost optimization in probabilistic unit commitment settings, are proposed. A stochastic programming approach, to decide the optimal size of battery storage by minimizing the cost of purchasing energy from the grid, is proposed in \cite{hafiz2019energy}. In \cite{li2018optimal}, a chance-constrained approach to minimize the system operation cost, considering a probabilistic constraint on the amount of underproduction, is synthesized. However, overproduction scenarios are not addressed. Furthermore, various optimization techniques \cite{ma2022optimal, khezri2022impact, yang2018battery} are proposed for the optimal allocation of BESs to enable cost-efficient operation of renewable-powered microgrids. However, in critical application scenarios, such as hospitals, reliability of the system is more important compared to cost optimization. For improving system reliability, various techniques in the state-of-the-art include chance-constrained optimization to minimize power curtailment \cite{ xie2022coordinate}, battery charge-discharge control \cite{aghamohammadi2014new}, data-driven approaches \cite{tan2010stochastic}, load-shaping techniques \cite{yang2013sizing}. Further, optimization methods to find optimal energy storage size for load loss minimization are devised \cite{khasanov2021optimal}. However, such methodologies consider either generation or load curtailment probability as the reliability index and thus are difficult to deploy in critical applications where minimizing curtailment of both generation and load is of importance. In \cite{habibi2018allocation}, a numerical analysis with joint optimization of generation and load curtailment is proposed. However, no guarantees of demand fulfillment are provided. Despite the importance, works on determining optimal BESs capacity to minimize both generation and load curtailment without degradation of the BESs, are limited in the existing literature. On the other hand, various data-driven power-sharing strategies to improve multi-microgrid operation are discussed in \cite{cao2021optimal,liu2022consensus,wang2023collaborative}. However, to the best of the author's knowledge, existing works on analytically tractable power-sharing policies, in the case of multiple microgrids, and an analysis of how the transmission line capacity affects the choice of battery capacity is scarce.

To this end, we propose a chance-constrained BESs sizing methodology and power-sharing policy, for multi-microgrid systems, to improve system reliability by minimizing both generation and load curtailment. Our primary contributions are summarized below:

\noindent 1. We propose an analytical method to determine the optimal BES energy capacity to minimize both load and generation curtailment for a finite time horizon. Our proposed method provides probabilistic guarantees on the amount of curtailment and ensures the available energy in the BESs remains within the operating range throughout the time horizon with high probability.

\noindent 2. We extend our BES energy capacity design methodology to a system having two MGs that can share power with each other via a connecting power line. Here, we provide a power transfer policy that achieves the minimum BES energy capacity, required to be installed in each of the MGs, while ensuring the BESs are operated within specified energy limits at all times. We also quantify the BES investment savings arsing from connected operation of two MGs.

The rest of the paper is organized as follows: The system description, problem formulation, and the solution methodology for the single MG system are provided in Section~\ref{sec:syst_desc},~\ref{sec:prob_formulation}, and Section~\ref{sec:solution_methodology_single_mg} respectively. Section~\ref{sec:two_mg_sol_methodology} provides the problem formulation and solution strategy for a connected two-microgrid system. Experimental results are presented in Section~\ref{sec:simu_results} and Section~\ref{sec:conclution} presents the concluding remarks.
\section{System Description}\label{sec:syst_desc}
We consider an MG having local RES, loads, and $N$ number of BESs. The RES and the BESs are utilized to supply the loads. The system architecture is illustrated in Fig.~\ref{fig:system_arch} which shows a microgrid network receiving power from RES and BESs while servicing loads. We denote the power output of the renewables, load demand, and the total power output of the BESs by $P_{g}(t), D_c(t)$, and $b(t)$ respectively. We consider that each BES is identical and has a maximum capacity of $B_{max} \in \mathbb{R}^+$ kWh.
\begin{figure}[!h]
     \centering
     \includegraphics[scale=0.5,trim={0.3cm 7.1cm 22cm 3.8cm},clip]{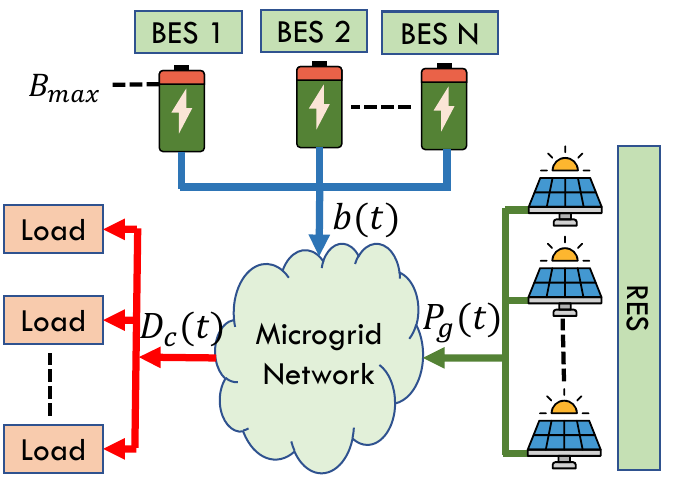}
     \caption{Microgrid architecture}
     \label{fig:system_arch}
\end{figure}
The objective of the microgrid is to determine the number of BESs, $N$, to sustain the system for a given finite time horizon with probabilistic guarantees on the amount of generation and load curtailment by ensuring that the energy available in the BESs remain within acceptable limits. Here we focus on determining optimal BESs capacity which satisfies the following: (i) if the RES produces more power 
than what is demanded by the load, then the battery can be charged with the surplus generation and (ii) when the RES production is less than the required load, the battery has sufficient storage to supply the deficit power. The above conditions are satisfied if the maximum capacity of BESs is not reached and the BESs never get depleted under the uncertainty presented by the RES. In the next section, we make the problem description precise.
\section{Problem Formulation}\label{sec:prob_formulation}
We consider that the load demand (with power $D_c(t)$ demanded at time $t$) in the system is required to be supplied continuously for a time horizon denoted by $T_f$. Note that, the total energy provided by $N$ number of BESs, till time $t\geq 0$ is given by $\int_0^t b(s)\text{d}s = \int_0^t (D_c(s)-P_g(s))\text{d}s$. Further, let the net energy available from the BESs, at any time $t \geq 0$, be denoted by $X(t)$. Then, we have,
\begin{align*}
    \textstyle X(t) = X(0)-\int_0^t b(s)\text{d}s = X(0)+\int_0^t (P_g(s)-D_c(s))\text{d}s,
\end{align*}
where $X(0)$ denotes the initially available energy in the BESs. Further, let $B_0 \in [0, B_{max}]$ kWh at time $t=0$ be the initial charge held by each of the $N$ BESs, and let $\alpha:=B_0/B_{max}$ denote the initial charge to the maximum charge ratio. Thus, the net initial energy available in the BESs is $X(0)=NB_0=\alpha NB_{max}$. The load demand and the RES power generation are both uncertain. To characterize the uncertainty we assume that the net surplus energy, $\int_0^t (P_g(s)-D_c(s))\text{d}s$, follows a Brownian motion characterized by $\sigma W(t)$
where $\sigma\neq 0$ models the volatility, and $W(t)$ is a Weiner process defined on a probability space $(\Omega, \mathcal{F}, \mathbb{P})$.
Thus, $X(t)$ can be written as:
\begin{align}\label{eq:brownian_soc}
    X(t) = \sigma W(t)+\alpha NB_{max}.
\end{align}
Note that, to avoid curtailment of RESs power, the Distribution System Operator (DSO) has to ensure that all the batteries have charge below the designated maximum $B_{\max}$ for all times $t\in [0, T_f]$. Similarly, to reduce load curtailment, BESs should have positive charge for all times $t\in [0,T_f]$. Thus to reduce both generation and load curtailment, the DSO is responsible for sustaining the batteries throughout the time interval $[0,T_f]$ without discharging the batteries completely or charging them to the maximum capacity. Let $X^{sup}_t:= \sup_{s\in [0,t]}X(s)$ and $X^{inf}_t:=\inf_{s\in[0,t]}X(s)$ denote the maximum and minimum energy available in the BESs at any time $s\in [0,t]$ respectively. Therefore, the DSO needs to ensure the following:
\begin{align}\label{eq:prob_objective}
    \mathbb{P}[X^{sup}_{T_f}< NB_{max}, X^{inf}_{T_f}> 0] \geq 1-\delta,
\end{align}
where $\delta>0$ is a small pre-specified tolerance in the probability of violating the conditions on the charge of the BESs. In the above setup, the decision variables include the initial charge $B_0$, and the number $N$ of batteries to be used. The problem translates to: given $\sigma, B_{max},\delta$, decide the minimum number, $N$, of batteries required, and the initial charge ratio $\alpha$, such that~(\ref{eq:prob_objective}) and~\eqref{eq:brownian_soc} are satisfied. Thus the following chance-constrained problem needs to be solved:
\begin{align}\label{eq:problem_statement}
\min_{\alpha, N} & \textstyle \hspace{0.1in} N \\
  & \text{subject to} \ \textstyle N \geq 0,\nonumber\\
  & \textstyle 0 \leq \alpha \leq 1, \nonumber\\
  & \textstyle X(t)=\sigma W(t)+\alpha NB_{max}, \nonumber\\
  & \textstyle \mathbb{P}[X^{sup}_{T_f}< NB_{max}, X^{inf}_{T_f}> 0] \geq 1-\delta. \label{eq:prob_constraint}
\end{align}
In the planning stage, the DSO solves~(\ref{eq:problem_statement}) to determine the required capacity of BESs. In the next section, we describe the solution methodology for the DSO.
\section{Solution Methodology}\label{sec:solution_methodology_single_mg}
In this section, we describe our proposed methodology to solve the problem of the DSO as described in the previous section. We emphasize that obtaining an analytical solution to~\eqref{eq:problem_statement} reamins intractable. Hence, we first analyze the chance-constraint~\eqref{eq:prob_constraint} to derive a tractable approximation of~(\ref{eq:problem_statement}). Further, we derive the corresponding value of $\alpha$ that leads to an explicit solution to the approximated problem.
\subsection{Analysis of the chance constraint}\label{subsec:chance_constraint_analysis}
Note that, $\mathbb{P}[X^{sup}_{T_f}\mspace{-4mu}<\mspace{-4mu} NB_{max}, X^{inf}_{T_f}> 0] = 1-\mathbb{P}[\{X^{sup}_{T_f} \geq NB_{max}\} \cup \{X^{inf}_{T_f} \leq 0\}]$. Therefore, the constraint~(\ref{eq:prob_constraint}) can be equivalently written as $\mathbb{P}[\{X^{sup}_{T_f} \geq NB_{max}\} \cup \{X^{inf}_{T_f} \leq 0\}] \leq \delta$. Further, for $a\in\mathbb{R}$, let 
\begin{align*}
    T(a):=\inf(t\geq 0, X(t)=a)
\end{align*} denote the first time the battery energy level reaches $a$ kWh. Then we have, $X_{T_f}^{sup} \geq NB_{max}$ if and only if $T(NB_{max}) \leq T_f$, and $X^{inf}_{T_f} \leq 0$ if and only if $T(0) \leq T_f$ \cite{karatzas2012brownian}. Thus, the constraint~(\ref{eq:prob_constraint}) can be written as $\mathbb{P}[\{T(NB_{max}) \leq T_f\} \cup \{T(0)\leq T_f\}] \leq \delta$. Consider the following problem:
\begin{align}\label{eq:problem_statement_conservative_1}
\min_{\alpha, N} & \textstyle \hspace{0.1in} N \\
  & \text{subject to} \ \textstyle N \geq 0,\ \textstyle 0 \leq \alpha \leq 1, \nonumber\\
  & \textstyle X(t)=\sigma W(t)+\alpha NB_{max}, \nonumber\\
  & \textstyle \mathbb{P}[T(NB_{max}) \leq T_f]\textstyle + \mathbb{P}[T(0) \leq T_f] \leq \delta. \label{eq:prob_constraint_conservative_1}
\end{align}
Note that a feasible solution to~\eqref{eq:problem_statement_conservative_1} is also a feasible solution to~\eqref{eq:problem_statement}, as $\mathbb{P}[\{T(NB_{max}) \leq T_f\} \cup \{T(0) \leq T_f\}] \leq \mathbb{P}[T(NB_{max})\leq T_f]+\mathbb{P}[T(0)\leq T_f]$. The solution to~\eqref{eq:problem_statement_conservative_1} provides an upper bound of the solution to~\eqref{eq:problem_statement}. In the subsequent development, we derive a tractable upper bound to the chance constraint~(\ref{eq:prob_constraint_conservative_1}), which we further utilize to find a theoretical solution to~\eqref{eq:problem_statement_conservative_1}.

\subsection{Battery Capacity Determination}\label{sec:equal_prod_cons}
In this section, we derive a tractable non-linear approximation to the chance-constraint~(\ref{eq:prob_constraint_conservative_1}) to obtain an analytical solution to~(\ref{eq:problem_statement_conservative_1}). Here, first we provide an upper bound to the chance-constraint~(\ref{eq:prob_constraint_conservative_1}) as presented in the following theorem:
\begin{lemma}\label{lem:gaussian_tail_bound}
    Let
    \begin{align}\label{eq:f_N_alpha}
        \textstyle f(N, \alpha) := e^{-\frac{N^2B_{max}^2(1-\alpha)^2}{2\sigma^2 T_f}} + e^{-\frac{N^2B_{max}^2\alpha^2}{2\sigma^2 T_f}},
    \end{align}
    where $\alpha \in [0,1]$ and $N\geq 0$. Then, we have,
    \begin{align}
        & \textstyle \mathbb{P}[T(NB_{max}) \leq T_f]  \textstyle + \mathbb{P}[T(0) \leq T_f] \leq f(N,\alpha),
    \end{align}
    for all $N$ and $\alpha$.
\end{lemma}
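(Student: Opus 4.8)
The plan is to recognize that $X(t)=\sigma W(t)+\alpha N B_{max}$ is a \emph{driftless} Brownian motion with volatility $|\sigma|>0$ started at $X(0)=\alpha N B_{max}$, and to reduce each first-passage probability to a Gaussian tail via the reflection principle. Since the start $\alpha N B_{max}$ lies strictly between the two barriers, reaching the upper level $NB_{max}$ forces the driving process $\sigma W$ to climb a gap of $NB_{max}(1-\alpha)$, while reaching $0$ forces it to drop by $\alpha N B_{max}$. Concretely, I would first rewrite the events as $\{T(NB_{max})\le T_f\}=\{\sup_{s\le T_f}\sigma W(s)\ge NB_{max}(1-\alpha)\}$ and $\{T(0)\le T_f\}=\{\inf_{s\le T_f}\sigma W(s)\le -\alpha N B_{max}\}$, which is exactly the equivalence already recorded in the excerpt between hitting times and the running sup/inf.

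Next I would invoke the reflection-principle identity $\sup_{s\le T_f}W(s)\stackrel{\mathrm{d}}{=}|W(T_f)|$, together with its mirror image for the infimum, to turn each one-sided excursion event into a \emph{two-sided} Gaussian tail. Writing $Z$ for a standard normal variable and using $W(T_f)\stackrel{\mathrm{d}}{=}\sqrt{T_f}\,Z$, this yields the exact expressions $\mathbb{P}[T(NB_{max})\le T_f]=\mathbb{P}\!\left[|Z|\ge NB_{max}(1-\alpha)/(|\sigma|\sqrt{T_f})\right]$ and $\mathbb{P}[T(0)\le T_f]=\mathbb{P}\!\left[|Z|\ge \alpha N B_{max}/(|\sigma|\sqrt{T_f})\right]$. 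The hypothesis $\sigma\neq 0$ is what guarantees both arguments are finite and nonnegative.

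The concluding step is a single Gaussian tail estimate: the bound $\mathbb{P}[|Z|\ge x]\le e^{-x^2/2}$, valid for every $x\ge 0$. Substituting the two arguments above produces exactly $e^{-N^2B_{max}^2(1-\alpha)^2/(2\sigma^2 T_f)}$ and $e^{-N^2B_{max}^2\alpha^2/(2\sigma^2 T_f)}$ (using $|\sigma|^2=\sigma^2$), and adding the two probabilities gives $f(N,\alpha)$, as claimed.

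The subtlety I expect to be the crux is matching the constants. A naive Chernoff bound only supplies $\mathbb{P}[Z\ge x]\le e^{-x^2/2}$ on a single tail, which after the factor $2$ coming from reflection would give $2\,e^{-(\cdots)}$ per barrier and overshoot $f$. What makes everything fit is the symmetric form $\mathbb{P}[|Z|\ge x]=2\,\mathbb{P}[Z\ge x]\le e^{-x^2/2}$, in which the factor $2$ from the reflection principle is precisely absorbed by the factor $\tfrac12$ hidden in the sharper one-sided tail bound $\mathbb{P}[Z\ge x]\le \tfrac12 e^{-x^2/2}$. I would justify this by setting $h(x):=e^{-x^2/2}-\mathbb{P}[|Z|\ge x]$, noting $h(0)=0$, computing $h'(x)=e^{-x^2/2}\bigl(\sqrt{2/\pi}-x\bigr)$ so that $h$ first increases and then decreases, and checking via the standard Mills-ratio asymptotic $\mathbb{P}[Z\ge x]\sim e^{-x^2/2}/(x\sqrt{2\pi})$ that $h(x)\to 0^+$ as $x\to\infty$; hence $h\ge 0$ on $[0,\infty)$. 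Apart from this constant-chasing, the argument is entirely routine.
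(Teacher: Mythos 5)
Your proof is correct, but it takes a genuinely different route from the paper's. The paper never invokes the reflection principle in its proof of this lemma: it introduces the exponential martingale $M(t)=e^{cW(t)-\frac{1}{2}c^2t}$, applies optional stopping at the hitting time to obtain the Laplace transform $\mathbb{E}[e^{-sT(NB_{max})}]=e^{-\sqrt{2s}\,NB_{max}(1-\alpha)/\sigma}$ (and similarly for $T(0)$), then applies Markov's inequality to $e^{-sT(\cdot)}$ and optimizes over $s>0$ --- a Chernoff bound on the hitting time, whose optimizer $s^*=\frac{N^2B_{max}^2(1-\alpha)^2}{2\sigma^2T_f}$ produces each exponential term of $f$. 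You instead use the exact law of the running supremum/infimum via reflection, reducing each hitting probability to a two-sided Gaussian tail, and then apply the sharp bound $\mathbb{P}[|Z|\ge x]\le e^{-x^2/2}$, which you prove by a valid monotonicity argument (your $h$ satisfies $h(0)=0$, increases on $[0,\sqrt{2/\pi}]$, then decreases to $0$, hence $h\ge 0$). Your identified crux --- that the naive one-sided bound $\mathbb{P}[Z\ge x]\le e^{-x^2/2}$ would overshoot by a factor of $2$ after reflection, and that the sharper constant in $\mathbb{P}[Z\ge x]\le\tfrac12 e^{-x^2/2}$ exactly absorbs that factor --- is accurate and is precisely why your route lands on the same $f(N,\alpha)$. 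As for what each approach buys: your argument passes through the exact hitting probabilities (the same expressions the paper records only later, in its optimality-gap subsection, citing Karatzas--Shreve), so it carries strictly more information and makes transparent how conservative $f$ is; the paper's martingale/Chernoff argument needs no sharp Gaussian tail constant and extends more readily to settings where reflection is unavailable, e.g.\ Brownian motion with drift or more general diffusions.
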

\begin{proof}
    Please see Appendix~\ref{appendix:gaussian_tail_bound}.
\end{proof}
Further note, for any $N\geq 0$, we have $f(N,\alpha) > 1$ if $\alpha\mspace{-4mu}=\mspace{-4mu}1$ or $0$. However, a feasible solution to~(\ref{eq:problem_statement_conservative_1}) requires $f(N,\alpha) \leq 1$ as $\delta \in (0,1)$. Now, consider the following problem:
\begin{align}\label{eq:problem_statement_conservative_2}
\tilde{N}:= \min_{\alpha, N} & \textstyle \hspace{0.1in} N \\
  & \text{subject to} \ \textstyle N \geq 0,\ \textstyle \alpha \in (0,1), \nonumber\\
  & \textstyle f(N, \alpha) \leq \delta. \label{eq:prob_constraint_conservative_2}
\end{align}
From Lemma~\eqref{lem:gaussian_tail_bound}, we derive that any feasible solution to~\eqref{eq:problem_statement_conservative_2} is also a feasible solution to~\eqref{eq:problem_statement_conservative_1}. Thus the solution to~\eqref{eq:problem_statement_conservative_2} provides an upper bound of the solution to~\eqref{eq:problem_statement_conservative_1}. In the subsequent development, we show how an explicit solution to $\tilde{N}$ is obtained. Let $N(\alpha)$ denote the solution to~\eqref{eq:problem_statement_conservative_2} for a given $\alpha\in(0,1)$, that is,
\begin{align}\label{eq:problem_statement_conservative_2_1}
N(\alpha) := \min_{N} & \textstyle \hspace{0.1in} N \\
  & \text{subject to} \ \textstyle N \geq 0,\ \textstyle f(N, \alpha) \leq \delta.\nonumber
\end{align}
Then we have,
\begin{align}\label{eq:problem_statement_conservative_2_2}
\tilde{N} = \min_{\alpha} & \textstyle \hspace{0.1in} N(\alpha) \\
  & \text{subject to} \ \textstyle \alpha \in (0,1) \nonumber.
\end{align}
We present the following theorem to provide an explicit solution to~(\ref{eq:problem_statement_conservative_2}) by considering the sub-problems~\eqref{eq:problem_statement_conservative_2_1} and~\eqref{eq:problem_statement_conservative_2_2}.
\begin{theorem}\label{theorem:solution_conservative_2_simplified}
    Let
    \begin{align}\label{eq:solution_conservative_2_simplified}
        \textstyle N_o:= \frac{\sqrt{8\sigma^2 T_f \ln(2/\delta)}}{B_{max}}.    
    \end{align}
    Then, for $\alpha=\frac{1}{2}$, the optimal value of~\eqref{eq:problem_statement_conservative_2_1} is given by $N(\frac{1}{2})=N_o$. Further, for all $\alpha \in (0,1)\setminus \{\frac{1}{2}\}$, we have, $N(\frac{1}{2})\leq N(\alpha)$. Thus optimal value of~\eqref{eq:problem_statement_conservative_2_2}, or equivalently the optimal value of the problem~\eqref{eq:problem_statement_conservative_2}, is $\tilde{N}=N(\frac{1}{2})=N_o$.
\end{theorem}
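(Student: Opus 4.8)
The plan is to solve the two nested sub-problems~\eqref{eq:problem_statement_conservative_2_1} and~\eqref{eq:problem_statement_conservative_2_2} directly, exploiting the fact that for each fixed $\alpha$ the constraint function $f(\cdot,\alpha)$ is strictly monotone in $N$. First I would observe that, for $\alpha\in(0,1)$ and $N>0$, both exponents in~\eqref{eq:f_N_alpha} carry strictly negative coefficients, so $\partial f/\partial N<0$; hence $f(\cdot,\alpha)$ decreases strictly from $f(0,\alpha)=2$ towards $0$. Consequently the inner problem~\eqref{eq:problem_statement_conservative_2_1} always has an active constraint, and $N(\alpha)$ is the unique root of $f(N,\alpha)=\delta$. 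For the specific value $\alpha=\tfrac12$ the two exponentials coincide, giving $f(N,\tfrac12)=2\exp(-N^2B_{max}^2/(8\sigma^2T_f))$; setting this equal to $\delta$ and solving for $N$ produces $N(\tfrac12)=N_o$ after a short computation. This disposes of the first assertion.

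For the second assertion I would reduce the inequality $N(\tfrac12)\le N(\alpha)$ to a statement about $f$ evaluated at the \emph{fixed} argument $N_o$. Because $f(\cdot,\alpha)$ is strictly decreasing and $f(N(\alpha),\alpha)=\delta$, the claim $N(\alpha)\ge N_o$ is equivalent to $f(N_o,\alpha)\ge\delta$. So it suffices to show that $\alpha\mapsto f(N_o,\alpha)$ attains its minimum over $(0,1)$ at $\alpha=\tfrac12$, where its value is exactly $\delta$. Writing $k:=N_o^2B_{max}^2/(2\sigma^2T_f)$, the definition of $N_o$ gives $k=4\ln(2/\delta)$, and substituting $\alpha=\tfrac12+t$ collapses $g(t):=f(N_o,\tfrac12+t)$ into the closed form $g(t)=2e^{-k/4}e^{-kt^2}\cosh(kt)$, whose value at $t=0$ is $2e^{-k/4}=\delta$. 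The desired inequality $g(t)\ge g(0)$ then reduces to $\cosh(kt)\ge e^{kt^2}$ on $[0,\tfrac12]$, equivalently $\psi(t):=\ln\cosh(kt)-kt^2\ge0$.

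I expect this last inequality to be the crux of the argument. The plan is to prove $\psi\ge0$ on $[0,\tfrac12]$ by a shape analysis rather than a crude bound: $\psi(0)=0$ and $\psi'(t)=k(\tanh(kt)-2t)$, so the critical points solve $\tanh(kt)=2t$. Since $\tanh(k\cdot)$ is concave and increasing on $[0,\infty)$ with initial slope $k>2$ (as $k=4\ln(2/\delta)>4\ln 2$ for $\delta\in(0,1)$), its graph meets the line $2t$ only at the origin and at one further point, so $\psi$ is unimodal — increasing then decreasing — and its minimum over $[0,\tfrac12]$ is attained at an endpoint. It then remains only to check the right endpoint: using $\cosh(k/2)\ge\tfrac12 e^{k/2}$ gives $\psi(\tfrac12)\ge k/4-\ln 2>0$, again because $k>4\ln 2$. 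Combining, $\psi\ge0$ on $[0,\tfrac12]$, and the symmetry $g(t)=g(-t)$ extends the bound to all $\alpha\in(0,1)$, yielding $f(N_o,\alpha)\ge\delta$ and hence $N(\alpha)\ge N_o=N(\tfrac12)$, so $\tilde N=N_o$ as claimed. The one place requiring care is the justification that $\tanh(kt)=2t$ has a single positive root, which rests on the strict concavity of $\tanh$ together with the condition $k>2$ guaranteed by $\delta\in(0,1)$.
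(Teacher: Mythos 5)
Your proposal is correct, and it reaches the key inequality $f(N_o,\alpha)\ge\delta$ by a genuinely different route than the paper. Both proofs handle the inner problem identically (strict monotonicity of $f(\cdot,\alpha)$ in $N$, active constraint, solve $f(N,\tfrac12)=\delta$), and both reduce the comparison $N(\alpha)\ge N(\tfrac12)$ to showing that $\alpha=\tfrac12$ is the global minimizer of $\alpha\mapsto f(N_o,\alpha)$. Where you diverge is in how that minimization is established. The paper works directly with the first-order condition: writing $k=2/\delta$, it derives the stationarity equation $k^{4(1-2\alpha)}=\tfrac{1-\alpha}{\alpha}$, computes $\partial^2 f/\partial\alpha^2$ at stationary points, and uses an auxiliary function $g(\alpha)=2\ln(1/\alpha)-2\ln(1/(1-\alpha))-1/\alpha+1/(1-\alpha)$ together with sign arguments to show every stationary point other than $\alpha=\tfrac12$ is a local maximum, so the interior global minimum must sit at $\alpha=\tfrac12$. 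You instead exploit the symmetry of $f$ about $\alpha=\tfrac12$: the substitution $\alpha=\tfrac12+t$ collapses $f(N_o,\cdot)$ into the closed form $2e^{-k/4}e^{-kt^2}\cosh(kt)$ (with your $k=4\ln(2/\delta)$), so the claim becomes the clean scalar inequality $\ln\cosh(kt)\ge kt^2$ on $[0,\tfrac12]$, which you prove by a unimodality argument: the critical-point equation $\tanh(kt)=2t$ has exactly one positive root (strict concavity of $\tanh$ plus initial slope $k>4\ln 2>2$), so $\psi(t)=\ln\cosh(kt)-kt^2$ increases then decreases, and checking the endpoints ($\psi(0)=0$ and $\psi(\tfrac12)\ge k/4-\ln 2>0$) finishes it. Your route buys transparency: the $\cosh$ form makes the symmetry and the location of the minimum visually evident, and it replaces the paper's somewhat laborious second-derivative bookkeeping with a one-variable inequality whose only delicate point (uniqueness of the positive root of $\tanh(kt)=2t$) you correctly isolate and justify. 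The paper's route, in exchange, classifies the entire stationary landscape of $f(N_o,\cdot)$, which is marginally more informative but not needed for the theorem. One small remark: your argument proves $\psi>0$ on $(0,\tfrac12]$, hence the strict statement $f(N_o,\alpha)>\delta$ for $\alpha\ne\tfrac12$, which matches the paper's conclusion that such $(\alpha,N_o)$ are infeasible; only the non-strict comparison $N(\tfrac12)\le N(\alpha)$ is claimed in the theorem, so this is more than sufficient.
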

\begin{proof}
    Please see Appendix~\ref{appendix:solution_conservative_2_simplified}.
\end{proof}
\noindent Therefore, given $\sigma,B_{max},\delta$ and $T_f$, an optimal strategy is to  initially charge the BESs to half of their maximum capacities, $B_{max}$. Further, the DSO can derive the optimal number of required BESs using~(\ref{eq:solution_conservative_2_simplified}) such that the batteries can sustain the demand throughout the time interval $[0, T_f]$ without getting discharged completely or charged to the maximum capacity with a probability of $1-\delta$. In the next section, we discuss optimality gap and rate optimality of our proposed solution to the problem~(\ref{eq:problem_statement_conservative_2}) compared to the problem~(\ref{eq:problem_statement_conservative_1}).
\subsection{Optimality gap}
Note that,
the probabilities $\mathbb{P}[T(NB_{max})\leq T_f]$ and $\mathbb{P}[T(0)\leq T_f]$ can be explicitly derived as follows (\cite{karatzas2012brownian}):
\begin{align}\label{eq:cdf_upper_lower}
    & \textstyle \mathbb{P}[T(NB_{max}) \textstyle \leq T_f] \textstyle = \frac{2}{\sqrt{2\pi}}\int_{\frac{NB_{max}(1-\alpha)}{\sigma \sqrt{T_f}}}^{\infty} e^{-\frac{1}{2}x^2}\text{d}x,\nonumber\\
    & \textstyle \mathbb{P}[T(0) \textstyle \leq T_f] \textstyle = \frac{2}{\sqrt{2\pi}}\int_{\frac{\alpha NB_{max}}{\sigma \sqrt{T_f}}}^{\infty} e^{-\frac{1}{2}x^2}\text{d}x.
\end{align}
Hence, for any given $\alpha$, $\mathbb{P}[T(NB_{max}) \textstyle \leq T_f]+\mathbb{P}[T(0) \textstyle \leq T_f]$ is monotonic decreasing in $N$. Let $\hat{N}$ denote the optimal value of the problem~\eqref{eq:problem_statement_conservative_1}. Here, from~(\ref{eq:cdf_upper_lower}), we have the following: Given $\delta, B_{max}$ and $T_f$, there exists a constant $\hat{c} \geq 0$, such that if $\hat{N} = \hat{c} \sigma \sqrt{T_f}$, $\mathbb{P}[T(\hat{N}B_{max}) \textstyle \leq T_f]+\mathbb{P}[T(0) \textstyle \leq T_f]\leq \delta$. Note that, from Theorem~\ref{theorem:solution_conservative_2_simplified}, we have $\tilde{N} = \tilde{c} \sigma \sqrt{T_f}$, where $\tilde{c}=\frac{\sqrt{8\ln(2/\delta)}}{B_{max}}$. Hence, our proposed solution to the problem~(\ref{eq:problem_statement_conservative_2}) is optimal up to a constant compared to the solution to~(\ref{eq:problem_statement_conservative_1}), that is $\lim_{T_f\to \infty} \frac{\hat{N}}{\tilde{N}}=c$, where $c\geq 0$ is a constant.

In the next section, we extend our analysis to a system having two microgrids connected to each other and derive the battery capacity for the entire system. Further, we show how the flow capacity of the power line connecting the MGs affects the required battery capacity and provide a design guideline for determining the same.
\section{Battery Sizing and Power Sharing Policy for Connected multi-microgrids}\label{sec:two_mg_sol_methodology}
\subsection{Problem formulation}
In this section, we extend our battery capacity determination methodology, described in Section~\ref{sec:equal_prod_cons}, to an interconnected two-microgrid system, as shown in Fig.~\ref{fig:system_arch_2_mg}. Each MG has its local RES generation and loads and further, they can share power with each other.
\begin{figure}[!h]
     \centering
     \includegraphics[scale=0.35,trim={0.25cm 1.0cm 22cm 0.1cm},clip]{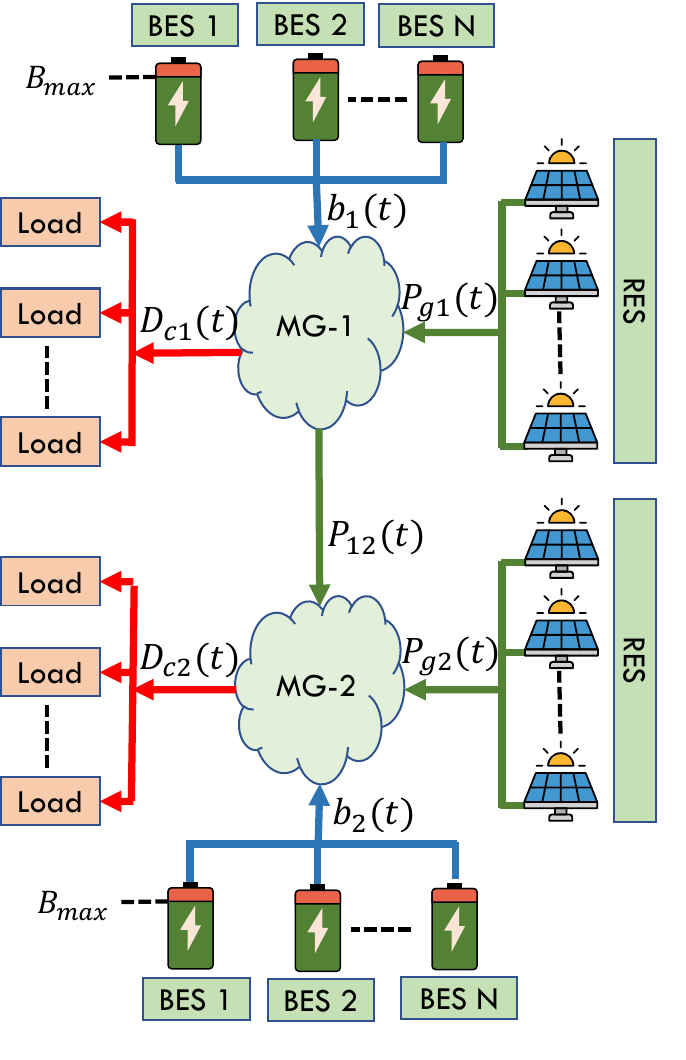}
     \caption{Connected two microgrid architecture}
     \label{fig:system_arch_2_mg}
     \vspace{-0.2cm}
\end{figure}
Let $P_{12}(t)$ denote the amount of power flow from MG-$1$ to MG-$2$ at time $t\in [0,T_f]$. Further, let $P_{gi}(t), D_{ci}(t), b_i(t)$ denote the RES power output, load, and power supplied by the BESs of MG-$i \in \{1,2\}$ respectively. Hence, in each MG, the total energy supplied by $N$ number of BESs, till time $t\geq 0$, is given by:
\begin{align*}
    \textstyle \int_0^t b_1(s)\text{d}s &= \textstyle \int_0^t (D_{c1}(s)-P_{g1}(s))\text{d}s + \int_0^t P_{12}(s)\text{d}s,\\
    \textstyle \int_0^t b_2(s)\text{d}s &= \textstyle \int_0^t (D_{c2}(s)-P_{g2}(s))\text{d}s - \int_0^t P_{12}(s)\text{d}s.
\end{align*}
Denoting $X_i(t), i\in \{1,2\}$ as the net energy available from the BESs of MG-$i, i\in\{1,2\}$ at time $t\geq 0$, we have, $X_i(t) = X_i(0)-\int_0^t b_i(s)\text{d}s$, where $X_i(0)$ denotes the intially available energy in the BESs. Similar to the case of single MG, let $N$ denote the number of BESs to be installed at each MG-$i, i=1,2$, and let $B_0\in[0,B_{max}]$ kWh be the initial charge held by each of the $N$ BESs at time $t=0$. Further, let $\alpha := B_0/B_{max}$ denote the initial charge to the maximum charge ratio. Then, we have, $X_i(0) = \alpha N B_{max}, i \in \{1,2\}$. Similar to the case of a single microgrid, here, we assume
that the net surplus energy, $\int_0^t (P_{gi}(s)-D_{ci}(s))\text{d}s$, follows a Brownian motion characterized by $\sigma W_i(t), t\in [0,T_f]$, where $\sigma \neq 0$ models the volatility, and $W_i(t)$ is a Weiner process. Further assume $W_i(t)$ is independent of $W_j(t), i\neq j, i,j\in\{1,2\}$. Here, we have the following:
\begin{align}\label{eq:brownian_soc_two_mg}
    \textstyle X_1(t) &\mspace{-4mu}=\mspace{-4mu} \textstyle X_1(0)\mspace{-4mu}-\mspace{-4mu}\int_0^t b_1(s)\text{d}s \nonumber\\
    & \textstyle =\mspace{-4mu} \alpha NB_{max} \mspace{-4mu}+\mspace{-4mu} \sigma W_1(t) \mspace{-4mu}-\mspace{-4mu} \int_0^t P_{12}(s)\text{d}s, \text{ and,} \nonumber\\
    \textstyle X_2(t) &\mspace{-4mu}=\mspace{-4mu} \textstyle X_2(0)\mspace{-4mu}-\mspace{-4mu}\int_0^t b_2(s)\text{d}s \nonumber\\
    & \textstyle =\mspace{-4mu} \alpha NB_{max} \mspace{-4mu}+\mspace{-4mu} \sigma W_2(t) \mspace{-4mu}+\mspace{-4mu} \int_0^t P_{12}(s)\text{d}s.
\end{align}
Our problem is divided into two stages; in the planning stage, the DSO determines the number of BESs (or equivalently the BESs capacities), $N$, along with their initial charge ratio $\alpha$. Once the batteries are deployed, the DSO controls the power transfer amount, $P_{12}(t), t\mspace{-4mu}\in\mspace{-4mu}[0,T_f]$, between the MGs in the operational stage. Let the maximum power flow limit of the line, connecting MG-1 and MG-2, is given by $\overline{P}_{12}$. 
Here, we define the power transfer policy, to be determined by the DSO, by $\pi \mspace{-4mu}:\mspace{-4mu}[0,T_f]\mspace{-4mu}\to\mspace{-4mu} \mathbb{R}$ and let $\Pi(\overline{P}_{12})\mspace{-4mu}:=\mspace{-4mu}\{\pi:\mspace{-4mu}[0,T_f]\mspace{-4mu}\to\mspace{-4mu} \mathbb{R} | \pi(t)\mspace{-4mu}\in\mspace{-4mu} [-\overline{P}_{12},\overline{P}_{12}] \mbox{ for all } 0\mspace{-4mu}\leq\mspace{-4mu} t\mspace{-4mu}\leq\mspace{-4mu} T_f\}$ denote the set of permissible policies adhering to the given power line capacity. In the subsequent analysis, we suppress the explicit dependency of $\pi(t)$ on $t$ for the brevity of notation.

Let $X_{i,T_f}^{sup}\mspace{-5mu}:=\mspace{-5mu}\sup_{t\in[0,T_f]}X_i(t)$ and $X_{i,T_f}^{inf}\mspace{-5mu}:=\mspace{-5mu} \inf_{t\in[0,T_f]}X_i(t), i\mspace{-4mu}=\mspace{-4mu}1,2$ denote the maximum and minimum energy available, respectively, at any time $t\mspace{-4mu}\in\mspace{-4mu}[0,T_f]$ in the BESs of MG-$i$. Note that, the available energy in the BESs at any time depends on the number $N$, initial charge ratio $\alpha$, and the power transfer policy $\pi$. Hence, let the probability that the BESs, given an initial charge of $X_i(0) \mspace{-4mu}=\mspace{-4mu} \alpha N B_{max}$, do not get fully charged or emptied any time during the time horizon $[0,T_f]$ be denoted by 
\begin{align*}
    \textstyle P(N,\alpha,\pi) &:= \textstyle \mathbb{P}[X_{1,T_f}^{sup}\mspace{-4mu}<\mspace{-4mu} NB_{max}, \textstyle X_{1,T_f}^{inf}\mspace{-4mu}>\mspace{-4mu} 0,\nonumber\\
    &\textstyle \textstyle X_{2,T_f}^{sup}\mspace{-4mu}<\mspace{-4mu} NB_{max}, X_{2,T_f}^{inf}\mspace{-4mu}>\mspace{-4mu} 0].
\end{align*}
Here, the objective of the DSO is to determine a suitable control strategy $\pi$ and initial charge ratio $\alpha$, that minimizes the number of BESs, $N$, and ensures with a given high probability of $1-\delta$, that the BESs are not fully charged or empty during the time horizon $[0,T_f]$. Therefore, DSO solves the following chance-constrained problem:
Given $B_{max}, \overline{P}_{12}, \delta, T_f$,
\begin{align}\label{eq:problem_statement_two_mg}
\min_{\alpha, N, \pi} & \textstyle \hspace{0.1in} N \\
  & \hspace{-1cm}\text{subject to} \ (\ref{eq:brownian_soc_two_mg}),\ \textstyle N \geq 0,\ \textstyle 0 \leq \alpha \leq 1, \nonumber\\
  & \hspace{-1cm}\textstyle \pi \in \Pi(\overline{P}_{12}),\ \textstyle P(N,\alpha,\pi) \geq\mspace{-4mu} 1\mspace{-4mu}-\mspace{-4mu}\delta.\nonumber
\end{align}
In the planning stage, the DSO solves~(\ref{eq:problem_statement_two_mg}) to decide the number of BESs to be installed in both the MGs, their initial charges, and the power transfer policy. The power transfer policy is followed throughout the time horizon $[0,T_f]$ in the operational stage, once the BESs are deployed. In Section~\ref{subsec:solution_methodology_two_mg}, we present an analytical solution to~(\ref{eq:problem_statement_two_mg}). Further note that, the power line capacity $\overline{P}_{12}$ affects the solution to~(\ref{eq:problem_statement_two_mg}); In the following sections, we characterize such dependency of the optimal solution of~(\ref{eq:problem_statement_two_mg}),  characterizing the number of batteries required, on the power line capacity.

\subsection{Solution methodology}\label{subsec:solution_methodology_two_mg}
Given $B_{max}, \overline{P}_{12}, \delta, T_f$,
let the solution set of~\eqref{eq:problem_statement_two_mg} be denoted by,
\begin{align*}
    \textstyle S_1 &:= \textstyle \argmin\limits_{N,\alpha,\pi} \{N:(N,\alpha,\pi) \in F_1\}, \text{ where,}\\
    \textstyle F_1 &:= \textstyle \{(N,\alpha,\pi):\eqref{eq:brownian_soc_two_mg} \text{ is satisfied}, 0\leq \alpha\leq 1, N\geq 0,\\
    & \pi\in\Pi(\overline{P}_{12}), \textstyle P(N,\alpha,\pi) \geq 1-\delta\}.
\end{align*}
Suppose $(N^*,\alpha^*,\pi^*)$ solves the problem~\eqref{eq:problem_statement_two_mg}, that is $(N^*,\alpha^*,\pi^*) \mspace{-4mu}\in\mspace{-4mu}S_1$ and $N^* \mspace{-4mu}=\mspace{-4mu} \min_{N,\alpha,\pi} \{N:(N,\alpha,\pi) \mspace{-4mu}\in\mspace{-4mu} F_1\}$.
In the following theorem, we show how an equivalent tractable problem can be constructed to obtain $N^*$, the optimal value of~\eqref{eq:problem_statement_two_mg}.
\begin{theorem}\label{theorem:problem_statement_two_mg_dual}
    Consider the following problem:
    \begin{align}\label{eq:problem_statement_two_mg_dual}
    \nu &:= \textstyle \max\limits_{N,\pi} \{P(N,\alpha^*, \pi): (N,\pi) \in F_2\}, \text{ where,}\\
    F_2 &:= \textstyle \{(N,\pi):\eqref{eq:brownian_soc_two_mg} \text{ with } \alpha=\alpha^* \text{ is satisfied},\nonumber\\
    &\textstyle \pi\in\Pi(\overline{P}_{12}),N \leq N^*\}\nonumber,
\end{align}
and let the solution set of~\eqref{eq:problem_statement_two_mg_dual} be denoted by,
\begin{align*}
    \textstyle S_2 := \argmax\limits_{N,\pi} \{P(N,\alpha^*,\pi):(N,\pi) \in F_2\}.
\end{align*}
Suppose $(N',\pi')\in S_2$, then $N'\mspace{-4mu}=\mspace{-4mu}N^*$ and $(N',\alpha^*,\pi') \in S_1$.
\end{theorem}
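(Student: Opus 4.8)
The plan is to exploit the fact that problem~\eqref{eq:problem_statement_two_mg_dual} is obtained from~\eqref{eq:problem_statement_two_mg} (restricted to $\alpha=\alpha^*$) by swapping the roles of the objective $N$ and the chance constraint $P(N,\alpha,\pi)\geq 1-\delta$: instead of minimizing $N$ subject to a probability floor, we maximize the probability subject to the cap $N\leq N^*$. I would establish the two assertions $N'=N^*$ and $(N',\alpha^*,\pi')\in S_1$ by proving the inequalities $N'\leq N^*$ and $N'\geq N^*$ separately and then checking feasibility in the original set $F_1$.

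First I would show $\nu\geq 1-\delta$. Since $(N^*,\alpha^*,\pi^*)\in S_1\subseteq F_1$, the pair $(N^*,\pi^*)$ satisfies~\eqref{eq:brownian_soc_two_mg} with $\alpha=\alpha^*$, has $\pi^*\in\Pi(\overline{P}_{12})$, and trivially obeys $N^*\leq N^*$; hence $(N^*,\pi^*)\in F_2$, so $F_2\neq\emptyset$ and $\nu\geq P(N^*,\alpha^*,\pi^*)\geq 1-\delta$, the last inequality because $(N^*,\alpha^*,\pi^*)$ is feasible for~\eqref{eq:problem_statement_two_mg}.

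Next, take $(N',\pi')\in S_2$, so that $P(N',\alpha^*,\pi')=\nu\geq 1-\delta$. Membership in $F_2$ gives $N'\leq N^*$, $\pi'\in\Pi(\overline{P}_{12})$, and~\eqref{eq:brownian_soc_two_mg} with $\alpha=\alpha^*$. I would then verify that the triple $(N',\alpha^*,\pi')$ satisfies every constraint defining $F_1$: the state equation holds by construction, $0\leq\alpha^*\leq 1$ is inherited from $(N^*,\alpha^*,\pi^*)\in F_1$, the chance constraint $P(N',\alpha^*,\pi')\geq 1-\delta$ holds by the previous step, and $N'\geq 0$ follows because the bound $P\geq 1-\delta>0$ cannot hold for $N'\leq 0$ (a nonpositive $N$ yields nonpositive initial charge $\alpha^* N B_{max}=X_i(0)$, and since $X_{i,T_f}^{inf}\leq X_i(0)\leq 0$ the event $\{X_{i,T_f}^{inf}>0\}$ is then impossible, forcing $P=0$). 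Thus $(N',\alpha^*,\pi')\in F_1$, and since $N^*=\min\{N:(N,\alpha,\pi)\in F_1\}$ we obtain $N'\geq N^*$. Combining with $N'\leq N^*$ yields $N'=N^*$; and because $(N',\alpha^*,\pi')\in F_1$ attains the minimal value $N^*$, it lies in $S_1$.

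The argument is essentially feasibility bookkeeping between the two constraint sets, so no hard probabilistic estimate is required. The one place demanding care — and the step I expect to be the main obstacle — is confirming that the maximizer genuinely re-enters $F_1$: one must argue that the cap $N\leq N^*$ together with the attained optimum $\nu\geq 1-\delta$ forces both the chance constraint \emph{and} $N'\geq 0$ to hold simultaneously, so that $(N',\alpha^*,\pi')$ is admissible for the original minimization. Once that is secured, the sandwich $N^*\leq N'\leq N^*$ closes the proof.
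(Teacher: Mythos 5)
Your proof is correct and follows essentially the same route as the paper's: both establish $(N^*,\pi^*)\in F_2$ to conclude $\nu \geq P(N^*,\alpha^*,\pi^*)\geq 1-\delta$, transfer this bound to $(N',\alpha^*,\pi')$ to place it in $F_1$, and close with the sandwich $N^*\leq N'\leq N^*$. Your explicit verification that $N'\geq 0$ (a constraint of $F_1$ that $F_2$ does not impose) is a detail the paper's proof silently skips, so it is a sound refinement rather than a different argument.
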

\begin{proof}
    Please see Appendix~\ref{appendix:problem_statement_two_mg_dual}.
\end{proof}
{\it Remark:}  Thus if $(N',\pi')$ solves problem~\eqref{eq:problem_statement_two_mg_dual} then $(N',\alpha^*,\pi')$ also solves problem~\eqref{eq:problem_statement_two_mg} and $N'$ will be equal to $N^*$ which is the optimal value of~\eqref{eq:problem_statement_two_mg}.

Note that, as $(N',\alpha^*,\pi')$ obtained from the solution to~\eqref{eq:problem_statement_two_mg_dual}, solves the problem~\eqref{eq:problem_statement_two_mg} as well, we first seek to derive $\pi'$ from~\eqref{eq:problem_statement_two_mg_dual}. In this context, let $\tilde{\pi}(N,\alpha)$ denote a power transfer policy that maximizes $P(N,\alpha,\pi)$ for a given $N>0$ and $\alpha\in[0,1]$, that is,
\begin{align}\label{eq:optimal_policy_given_n_alpha_two_mg}
    \tilde{\pi}(N,\alpha) := & \argmax_{\pi}\  P(N,\alpha, \pi) \\
    & \text{subject to } (\ref{eq:brownian_soc_two_mg}), \pi\in\Pi(\overline{P}_{12}). \nonumber
\end{align}
In the subsequent section, we derive the policy $\tilde{\pi}(N,\alpha)$ and show that the policy does not depend on $N$ and $\alpha$ and rather depends on the available energy in the batteries of the MGs.
\subsubsection{Power Transfer Policy Determination}
Let the time horizon $[0,T_f]$ be partitioned into $K$ intervals $0\mspace{-4mu}=\mspace{-4mu}t_0\mspace{-4mu}<\mspace{-4mu}t_1\mspace{-4mu}<\mspace{-4mu}\ldots\mspace{-4mu}<\mspace{-4mu}t_K\mspace{-4mu}=\mspace{-4mu}T_f$, such that $t_k\mspace{-4mu}=\mspace{-4mu}k\Delta t, k\mspace{-4mu}\in\mspace{-4mu}\{0,1,\ldots,K\}$ where $\Delta t = T_f/K$.
Let $\overline{X}_k\mspace{-4mu}:=\mspace{-4mu}[X_1(t_k)\ X_2(t_k)]^T$ denote the vector of available energy in the batteries of the MGs at time $t_k$. Note that, in discrete time, the power transfer policy $\pi$ is a sequence of $K$ power transfer control laws, that is, $\pi \mspace{-4mu}=\mspace{-4mu}[P_{12}(0),P_{12}(t_1),\ldots,P_{12}(t_{K-1})]$. Further, let $P^{\pi}_0(\overline{x}_0)\mspace{-4mu}:=\mspace{-4mu}\mathbb{P}[\overline{X}_k \mspace{-4mu}\in\mspace{-4mu}B, \text{ for all }k\mspace{-4mu}\in\mspace{-4mu}\{1,\ldots,K\}|\overline{X}_0=\overline{x}_0]$ denote the probability that $\overline{X}_k$ remains within some given interval $B\mspace{-4mu}\in\mspace{-4mu}\mathbb{R}^2$ for all $k$, starting with an initial value of $\overline{X}_0=\overline{x}_0$ and following a given policy $\pi$ thereafter. We define $B \mspace{-4mu}:=\mspace{-4mu} [0, NB_{max}]\mspace{-4mu}\times\mspace{-4mu}[0, NB_{max}]$. We choose $\Delta t$ in such a way that $NB_{max}-\overline{P}_{12}\Delta t> 0$. Here, we define $N_B\mspace{-4mu}:=\mspace{-4mu}NB_{max}$, for brevity of notation.
Consider, the problem, given $N_B,\overline{x}_0\in [0,N_B],T_f,\overline{P}_{12}$,
\begin{align}\label{eq:problem_statement_two_mg_dual_discrete}
\max_{\pi} & \textstyle \hspace{0.1in} P^{\pi}_0(\overline{x}_0) \\
  & \hspace{-1cm}\text{subject to} \ (\ref{eq:brownian_soc_two_mg}), \pi\in\Pi(\overline{P}_{12}).\nonumber
\end{align}
Let $\pi'$ solve the problem~\eqref{eq:problem_statement_two_mg_dual_discrete}, that is $\pi'\mspace{-4mu}=\mspace{-4mu}[P'_{12}(t_0), P'_{12}(t_1),\ldots,P'_{12}(t_{K-1})] \mspace{-4mu}=\mspace{-4mu} \argmax_{\pi} \{P^{\pi}_0(\overline{x}_0):\eqref{eq:brownian_soc_two_mg} \text{ is satisfied},\pi\mspace{-4mu}\in\mspace{-4mu}\Pi(\overline{P}_{12})\}$. We utilize the dynamic programming principle to derive $\pi'$.

\noindent Consider the last time step $t_{K-1}=T_f-\Delta t$, and suppose $\overline{X}_{K-1}\mspace{-4mu}=\mspace{-4mu}\overline{x}\mspace{-4mu}=\mspace{-4mu}[x_1,x_2]^T$. From~\eqref{eq:brownian_soc_two_mg} and by the definition of Wiener process, for the last time step, we have, $\overline{X}_K\mspace{-4mu}\sim\mspace{-4mu} \mathcal{N}(\overline{x}+(P_{12}(t_{K-1})\Delta t)[-1\ 1]^T,(\sigma^2 \Delta t)\mathbb{I})$, where $\mathbb{I}\mspace{-4mu}\in\mspace{-4mu}\mathbb{R}^2$ is an identity matrix. Note, $\mathbb{P}[\overline{X}_K\mspace{-4mu}\in\mspace{-4mu}B|\overline{X}_{K-1}\mspace{-4mu}=\mspace{-4mu}\overline{x}\mspace{-4mu}\in\mspace{-4mu}B]$ can further be written as
\begin{align*}
    & \textstyle \mathbb{P}[\overline{X}_K\mspace{-4mu}\in\mspace{-4mu}B|\overline{X}_{K-1}\mspace{-4mu}=\mspace{-4mu}\overline{x}\mspace{-4mu}\in\mspace{-4mu}B]\\
    & \textstyle = \mathbb{P}[X_1(t_K)\mspace{-4mu}\in\mspace{-4mu} [0,N_B]|X_1(t_{K-1})\mspace{-4mu}=\mspace{-4mu}x_1\mspace{-4mu}\in\mspace{-4mu}[0,N_B]]\\
    & \textstyle \mathbb{P}[X_2(t_K)\mspace{-4mu}\in\mspace{-4mu} [0,N_B]|X_2(t_{K-1})\mspace{-4mu}=\mspace{-4mu}x_2\mspace{-4mu}\in\mspace{-4mu}[0, N_B]]\\
    & =\textstyle \Big[\Phi\Big(\frac{N_B-x_1+P_{12}\Delta t}{\sigma \sqrt{\Delta t}}\Big)\mspace{-4mu}-\mspace{-4mu}\Phi\Big(\frac{-x_1+P_{12}\Delta t}{\sigma \sqrt{\Delta t}}\Big)\Big]\\
    & \textstyle\Big[\Phi\Big(\frac{N_B-x_2-P_{12}\Delta t}{\sigma \sqrt{\Delta t}}\Big)\mspace{-4mu}-\mspace{-4mu}\Phi\Big(\frac{-x_2-P_{12}\Delta t}{\sigma \sqrt{\Delta t}}\Big)\Big]\\
    & =\textstyle \Big[\Phi(g_1(x_1,P_{12}))-\Phi(g_2(x_1,P_{12}))\Big]\\
    & \textstyle \Big[\Phi(g_3(x_1,P_{12}))-\Phi(g_4(x_1,P_{12}))\Big],
\end{align*}
where $\Phi(\cdot)$ is the CDF of standard normal, and,
\begin{align*}
    \textstyle g_1(x_1,P_{12})\mspace{-4mu} & \textstyle :=\mspace{-4mu}\frac{N_B-x_1+P_{12}\Delta t}{\sigma \sqrt{\Delta t}},\ 
    \textstyle g_2(x_1,P_{12})\mspace{-4mu} \textstyle :=\mspace{-4mu}\frac{-x_1+P_{12}\Delta t}{\sigma \sqrt{\Delta t}}\\
    \textstyle g_3(x_2,P_{12})\mspace{-4mu} & \textstyle :=\mspace{-4mu}\frac{N_B-x_2-P_{12}\Delta t}{\sigma \sqrt{\Delta t}},\ \textstyle g_4(x_2,P_{12})\mspace{-4mu} \textstyle :=\mspace{-4mu}\frac{-x_2-P_{12}\Delta t}{\sigma \sqrt{\Delta t}},
\end{align*}
and $\overline{x} \in [0, N_B]\times[0, N_B]$.


At the last time step, we need to determine the power transfer amount $P_{12}(t_{K-1})$ that maximizes the probability, $P^{\pi}_{K-1}(\overline{x})=\mathbb{P}[\overline{X}_K\mspace{-4mu}\in\mspace{-4mu}B|\overline{X}_{K-1}\mspace{-4mu}=\mspace{-4mu}\overline{x}\mspace{-4mu}\in\mspace{-4mu}B]$. In the subsequent paragraphs, we omit the explicit dependency of $P_{12}(t_{K-1})$ on time for brevity of notation. 
We further define $V_k^{\pi}(\overline{x}) := P_k^{\pi}(\overline{x})=\mathbb{P}[\overline{X}_k\in B, \text{ for all } k \in \{k+1,\ldots,K\}|\overline{X}_k=\overline{x}_0\in B]$. Thus,
\begin{align*}
    \textstyle V_{K-1}^{\pi}(\overline{x}) &:= \textstyle \Big[\Phi(g_1(x_1,P_{12}))-\Phi(g_2(x_1,P_{12}))\Big]\\
    & \textstyle \Big[\Phi(g_3(x_2,P_{12}))-\Phi(g_4(x_2,P_{12}))\Big].
\end{align*}
Thus, at the last time step, we solve the following problem: Given $\overline{x}\in [0,N_B]$,
\begin{align}\label{eq:problem_two_mg_last_step}
    \sup_{P_{12}} & \textstyle \hspace{0.1in} \textstyle V_{K-1}^{\pi}(\overline{x})\\
    & \hspace{-1cm}\text{subject to} \ -\overline{P}_{12}\leq P_{12}\leq \overline{P}_{12},
\end{align}
In the subsequent paragraphs, we derive a solution to~\eqref{eq:problem_two_mg_last_step}.

\noindent Note that, the Lagrangian of the problem~(\ref{eq:problem_two_mg_last_step}) can be written as follows:
\begin{align}\label{eq:lagrangian_two_mg}
    & \textstyle L(P_{12},\lambda_1,\lambda_2) := V_{K-1}^{\pi}(\overline{x}) \nonumber\\
    & \textstyle -\lambda_1(P_{12}-\overline{P}_{12}) - \lambda_2(-\overline{P}_{12}-P_{12}),
\end{align}
where $\lambda_1\mspace{-4mu}\geq \mspace{-4mu}0$ and $\lambda_2\mspace{-4mu}\geq \mspace{-4mu}0$ are the Lagrange multipliers. The KKT conditions \cite{boyd2004convex} for the Lagrangian $L(P_{12},\lambda_1,\lambda_2)$ can be written as:
\begin{align}\label{eq:problem_two_mg_last_step_kkt}
    \textstyle \frac{\partial L(\cdot)}{\partial P_{12}} = 0,\ \lambda_1(P_{12}\mspace{-4mu}-\mspace{-4mu}\overline{P}_{12})=0, \lambda_2(-\overline{P}_{12}\mspace{-4mu}-\mspace{-4mu}P_{12})=0.
\end{align}
Here, we have the following:
\begin{align}\label{eq:kkt_two_mg}
    & \textstyle \frac{\partial L(\cdot)}{\partial P_{12}} = 0\nonumber\\
    & \textstyle \implies -\frac{\sqrt{\Delta t}}{\sqrt{2\pi}\sigma}\Big[e^{-\frac{1}{2}(g_2(x_1,P_{12}))^2}-e^{-\frac{1}{2}(g_1(x_1,P_{12}))^2}\Big]\nonumber\\
    & \textstyle \Big[\Phi(g_3(x_1,P_{12}))-\Phi(g_4(x_1,P_{12}))\Big]\nonumber\\
    & \textstyle + \frac{\sqrt{\Delta t}}{\sqrt{2\pi}\sigma}\Big[\Phi(g_1(x_1,P_{12}))-\Phi(g_2(x_1,P_{12}))\Big]\nonumber\\
    & \textstyle \Big[e^{-\frac{1}{2}(g_4(x_2,P_{12}))^2}-e^{-\frac{1}{2}(g_3(x_2,P_{12}))^2}\Big] - \lambda_1 + \lambda_2=0.
\end{align}
Further, from the complementary slackness conditions, we can deduce the following:

\noindent Case 1: It is not feasible to have non-zero values for both $\lambda_1$ and $\lambda_2$.

\noindent Case 2: If $\lambda_1 \mspace{-4mu}\neq\mspace{-4mu}0$, then $P_{12}\mspace{-4mu}=\mspace{-4mu}\overline{P}_{12}$ and $\lambda_2\mspace{-4mu}=\mspace{-4mu}0$.

\noindent Case 3: If $\lambda_2\mspace{-4mu}\neq\mspace{-4mu}0$, then $P_{12}\mspace{-4mu}=\mspace{-4mu}-\overline{P}_{12}$ and $\lambda_1\mspace{-4mu}=\mspace{-4mu}0$.

Considering Cases 2, and 3, it can be shown that, (i) if $x_1>x_2+2\overline{P}_{12}\Delta t$, the KKT conditions~\eqref{eq:problem_two_mg_last_step_kkt} are satisfied with $P_{12}=\overline{P}_{12}$, (ii) if $x_2>x_1+2\overline{P}_{12}\Delta t$, the KKT conditions~\eqref{eq:problem_two_mg_last_step_kkt} are satisfied with $P_{12}=-\overline{P}_{12}$, and (iii) if $x_1=x_2$, the KKT conditions are satisfied with $P_{12}=0$.
In the case of, $|x_1-x_2|\mspace{-4mu}\in\mspace{-4mu}(0,2\overline{P}_{12}\Delta t]$, we consider $P_{12}=\frac{x_1-x_2}{2\Delta t}$. Note that, in this case, $\lambda_1=\lambda_2=0$, and, $P_{12}=\frac{x_1-x_2}{2\Delta t}$ satisfies the KKT conditions~\eqref{eq:kkt_two_mg}. Later we show that as $\Delta t$ becomes smaller ($\Delta t \to 0$), the region $|x_1-x_2| \in (0,2\overline{P}_{12}\Delta t]$ shrinks and we obtain a bang-bang power transfer policy.

Hence, in summary, we have the following: a power transfer policy, characterized by,
\begin{align}\label{eq:optimal_policy_delta_t}
    & \textstyle P_{12}=\overline{P}_{12}, \text{ if } x_1>x_2+2\overline{P}_{12}\Delta t,\nonumber\\
    & \textstyle P_{12} = -\overline{P}_{12}, \text { if } x_2>x_1+2\overline{P}_{12}\Delta t,\nonumber\\
    & \textstyle P_{12}=0, \text{ if }x_1=x_2,\nonumber\\
    & \textstyle P_{12}=\frac{x_1-x_2}{2\Delta t}, \text{ if }|x_1-x_2|\in (0, 2\overline{P}_{12}\Delta t],
\end{align}
satisfies the KKT conditions~\eqref{eq:kkt_two_mg}.

\noindent Let the optimal value of~\eqref{eq:problem_two_mg_last_step} be denoted by $V_{K-1}^*(\overline{x}) := \sup_{P_{12}}\{V_{K-1}^{\pi}(\overline{x}):\overline{P}_{12}\leq P_{12}<\overline{P}_{12}\}$. Here we assume that $V_{K-1}^*(\overline{x})$ is concave in $\overline{x} \in [0,N_B]$, which can be empirically verified for the power transfer policy~\eqref{eq:optimal_policy_delta_t}. In the subsequent analysis, we show that under the assumption, $V_{K-1}^*(\overline{x})$ is concave in $\overline{x}$, we have $V_{k}^*(\overline{x})$ is concave in $\overline{x}\in [0,N_B]$ for all $k\in \{0,1,\ldots,K-1\}$, and thus, the power transfer policy~\eqref{eq:optimal_policy_delta_t} is optimal for all $k$.

\noindent Note that, $P_0^{\pi}(\overline{x}_0)$ can be expressed as follows:
\begin{align*}
    \textstyle P_0^{\pi}(\overline{x}_0) = \mathbb{E}[\Pi_{j=1}^{K} \mathbbm{1}_{B}(\overline{X}_j)|\overline{X}_0=\overline{x}_0],
\end{align*}
where $\mathbbm{1}_A(\overline{X}) \mspace{-4mu}=\mspace{-4mu} 1$,  if $\overline{X}\in A$, and $0$ otherwise.

Thus we have,
\begin{align}\label{eq:value_func_definition_two_mg}
    \textstyle V_k^{\pi}(\overline{x}) &= \textstyle \mathbb{E}[\Pi_{j=k+1}^{K}\mathbbm{1}_{B}(\overline{X}_j)|\overline{X}_k = \overline{x}],\\
    \textstyle V_k^*(\overline{x}) & = \textstyle \sup_{\pi}V_k^{\pi}(\overline{x}).
\end{align}

We denote the transition probability by $Q_k(\cdot)$ which is defined as follows:
\begin{align*}
    \textstyle Q_k(\mathcal{X}|\overline{x},P_{12}(t_{k})) := \mathbb{P}[\overline{X}_{k+1}\in \mathcal{X}|\overline{X}_{k}=\overline{x},P_{12}(t_{k})].
\end{align*}
In subsequent analysis, we show that the following recursion,
\begin{align}\label{eq:recusrion_two_mg}
    \textstyle V_K^*(\overline{x}) &= \textstyle 1,\\
    \textstyle V_k^*(\overline{x}) &= \textstyle \sup\limits_{P_{12}(t_k)} \int \mathbbm{1}_{B}(\overline{y})V^*_{k+1}(\overline{y})Q_{k}(\text{d}\overline{y}|\overline{x},P_{12}(t_k)),\nonumber\\
    & \textstyle k\in\{0,1,\ldots,K-1\},\nonumber
\end{align}
solves the problem~\eqref{eq:problem_statement_two_mg_dual_discrete}.

\noindent We start with the base case $k=K-1$. From the definition~\eqref{eq:value_func_definition_two_mg}, we have,
\begin{align*}
    & \textstyle V^*_{K-1}(\overline{x}) \textstyle = \sup\limits_{P_{12}(t_{K-1})}\mathbb{E}[\mathbbm{1}_{B}(\overline{X}_K)|\overline{X}_{K-1}=\overline{x}]\\
    &= \textstyle \sup\limits_{P_{12}(t_{K-1})}\int \mathbbm{1}_{B}(\overline{y})\mathbb{P}[\overline{X}_K\in \text{d}\overline{y}|\overline{X}_{K-1}=\overline{x},P_{12}(t_{K-1})]\\
    &= \textstyle \sup\limits_{P_{12}(t_{K-1})}\int \mathbbm{1}_{B}(\overline{y})V^*_K(\overline{y})Q_{K-1}(\text{d}\overline{y}|\overline{x},P_{12}(t_{K-1})).
\end{align*}
Thus the recursion~\eqref{eq:recusrion_two_mg} holds for $k=K-1$. Now let $k\in\{0,1,\ldots,K-2\}$ and presume that $V^*_{k+1}(\overline{x})$ is known. From the definition~\eqref{eq:value_func_definition_two_mg}, we have the following:
\begin{align*}
    & \textstyle V_k^*(\overline{x}) = \sup\limits_{P_{12}(t_k),\ldots,P_{12}(t_{K-2})}\mathbb{E}[\Pi_{j=k+1}^{K-2}\mathbbm{1}_{B}(\overline{X}_j)|\overline{X}_k=\overline{x}]\\
    & \textstyle = \sup\limits_{P_{12}(t_k),\ldots,P_{12}(t_{K-2})} [\int \Pi_{j=k+1}^{K-2}\mathbbm{1}_{B}(\overline{X}_j)\\
    & \textstyle \Pi_{i=k+2}^{K-2}Q_{i-1}(\text{d}\overline{y}_i|\overline{y}_{i-1},P_{12}(t_{i-1}))Q_{k}(\text{d}\overline{y}_{k+1}|\overline{x},P_{12}(t_k))\\
    & \textstyle =\sup\limits_{P_{12}(t_k)}[\int \mathbbm{1}_{B}(\overline{y}_{k+1})\\
    & \textstyle \Big(\sup\limits_{P_{12}(t_{k+1}),\ldots,P_{12}(t_{K-2})}[\int \Pi_{j=k+2}^{K-2}\mathbbm{1}_{B}(\overline{y}_j)\\
    & \textstyle \Pi_{i=k+2}^{K-2}Q_{i-1}(\text{d}\overline{y}_i|\overline{y}_{i-1},P_{12}(t_{i-1})\Big)Q_{k}(\text{d}\overline{y}_{k+1}|\overline{x},P_{12}(t_{k}))\\
    & = \textstyle \sup\limits_{P_{12}(t_k)}[\int \mathbbm{1}_{B}(\overline{y}_{k+1})V^*_{k+1}(\overline{y}_{k+1})Q_{k}(\text{d}\overline{y}_{k+1}|\overline{x},P_{12}(t_{k}))].
\end{align*}
This completes the proof that~\eqref{eq:recusrion_two_mg} solves the problem~\eqref{eq:problem_statement_two_mg_dual_discrete}.

Under the assumption that $V^*_{K-1}(\overline{x})$ is concave in $\overline{x}$, by the recursion~\eqref{eq:recusrion_two_mg}, we have that $V^*_{k}(\overline{x})$ is also concave in $\overline{x}$, and the power transfer control law~\eqref{eq:optimal_policy_delta_t} is optimal for all $k\in \{0,1,\ldots,K\}$. We remark that, as $\Delta t$ gets smaller ($\Delta t \to 0$) the region, $|x_1-x_2| \in (0, 2\overline{P}_{12}\Delta t]$, shrinks. Further, recall that,
\begin{align*}
    & \textstyle P(N,\mspace{-2mu}\alpha,\mspace{-2mu}\pi) \mspace{-4mu}=\mspace{-4mu} \textstyle \mathbb{P}[X_{1,T_f}^{sup}\mspace{-4mu}<\mspace{-4mu} N_B, \textstyle X_{1,T_f}^{inf}\mspace{-4mu}>\mspace{-4mu} 0,\textstyle \textstyle X_{2,T_f}^{sup}\mspace{-4mu}<\mspace{-4mu} N_B, X_{2,T_f}^{inf}\mspace{-4mu}>\mspace{-4mu} 0],\\
    & = \textstyle \mathbb{P}[X_1(t)\in[0,N_B], X_2(t)\in[0,N_B], \text{ for all }t\in [0,T_f]].
\end{align*}
On the other hand, by definition, $P_0^{\pi}(\overline{x}_0)=\mathbb{P}[\overline{X}_k\in [0,N_B], \text{ for all } k\in\{0,1,\ldots,K\}|\overline{X}_0=\overline{x}_0]$, with $\Delta t=\frac{T_f}{K}$. Thus taking $\Delta t \to 0$ we can show that the optimal power transfer policy, that maximizes $P(N,\alpha,\pi)$, is as follows:
\begin{align}\label{eq:optimal_policy_two_mg}
        \tilde{P}_{12}(t) = \begin{cases}
            \textstyle \overline{P}_{12}, & \text{ if } \textstyle X_1(t) > X_2(t),\\
            \textstyle -\overline{P}_{12}, & \text{ if } \textstyle X_1(t) < X_2(t),\\
            0, & \text{ if } \textstyle X_1(t)=X_2(t)
        \end{cases}
    \end{align}
\subsubsection{Battery Capacity Determination}
As per the optimal power transfer policy, at any time $t\in[0,T_f]$, the MG having higher available energy in its battery storage transfers power to the other one. Here, note that, $\tilde{P}_{12}(t)$ does not depend on the choice of $N$ and $\alpha$, thus, $\tilde{\pi}(N,\alpha)$ leads to the optimal solution to~\eqref{eq:problem_statement_two_mg_dual}; hence $\tilde{\pi}(N,\alpha)=\pi'$. Let $X_1^{\pi'}(t)$ and $X_2^{\pi'}(t)$ denote the total available energy in the BESs of MG-$1$ and MG-$2$ respectively, following the policy $\pi'$. Thus, from~\eqref{eq:brownian_soc_two_mg}, we have the following:
\begin{align}\label{eq:brownian_soc_two_mg_under_optimal_policy}
    \textstyle X_1^{\pi'}(t) &\mspace{-4mu}=\mspace{-4mu} \textstyle \alpha NB_{max} \mspace{-4mu}+\mspace{-4mu} \sigma W_1(t) \mspace{-4mu}-\mspace{-4mu} \int_0^t \tilde{P}_{12}(s)\text{d}s, \text{ and,} \nonumber\\
    \textstyle X_2^{\pi'}(t) &\mspace{-4mu}=\mspace{-4mu} \textstyle \alpha NB_{max} \mspace{-4mu}+\mspace{-4mu} \sigma W_2(t) \mspace{-4mu}+\mspace{-4mu} \int_0^t \tilde{P}_{12}(s)\text{d}s.
\end{align}
Further, from Theorem~\ref{theorem:problem_statement_two_mg_dual}, we have that, $\pi'$ solves problem~\eqref{eq:problem_statement_two_mg} as well. Hence, upon obtaining $\pi'$, we simplify problem~\eqref{eq:problem_statement_two_mg} as follows:
\begin{align}\label{eq:problem_statement_two_mg_1}
\min_{\alpha, N} & \textstyle \hspace{0.1in} N \\
  & \hspace{-1cm}\text{subject to} \ (\ref{eq:brownian_soc_two_mg_under_optimal_policy}),\ \textstyle N \geq 0,\ \textstyle 0 \leq \alpha \leq 1, \nonumber\\
  & \hspace{-1cm}\textstyle P(N,\alpha,\pi') \geq\mspace{-4mu} 1\mspace{-4mu}-\mspace{-4mu}\delta.\nonumber
\end{align}
In the subsequent analysis, we show how a tractable solution to the problem~(\ref{eq:problem_statement_two_mg_1}) is obtained to determine the number of BESs (or equivalently the BESs capacity of the MGs).

Let $X_{c}^{\pi'}(t):= X_1^{\pi'}(t)+X_2^{\pi'}(t)$ and $X_d^{\pi'}(t) := X_2^{\pi'}(t)-X_1^{\pi'}(t), t\in [0,T_f]$. Then, from~\eqref{eq:brownian_soc_two_mg_under_optimal_policy}, we have the following:
\begin{align}\label{eq:brownian_soc_two_mg_1}
    \textstyle X_c^{\pi'}(t) & \textstyle = 2\alpha NB_{max}\mspace{-4mu}+\mspace{-4mu}\sigma (W_1(t) + W_2(t))\nonumber\\
    \textstyle X_d^{\pi'}(t) & \textstyle = \sigma(W_2(t)-W_1(t))+2\int_0^t \tilde{P}_{12}(s)\text{d}s.
\end{align}
Further, let $X_{c,T_f}^{\pi',sup}:= \sup_{t\in[0,T_f]}X_c^{\pi'}(t)$ and $X_{c,T_f}^{\pi',inf}:= \inf_{t\in [0,T_f]}X_c^{\pi'}(t)$. Here we present the following proposition which provides an upper bound to the solution to~\eqref{eq:problem_statement_two_mg_1}:
\begin{proposition}\label{prop:conservative_sol_two_mg}
    Let $N^*$ denote the solution to~(\ref{eq:problem_statement_two_mg_1}) and $\beta \geq 0$ be a given constant. Further, consider the following problem:
    \begin{align}\label{eq:problem_statement_conservative_1_two_mg_1}
    \min_{\alpha, N} & \textstyle \hspace{0.1in} N \\
  & \text{subject to} \ (\ref{eq:brownian_soc_two_mg_1}),\ \textstyle N \geq 0,\ 0 \leq \alpha \leq 1, \nonumber\\
  & \textstyle \mathbb{P}[X_{c,T_f}^{\pi',sup} \geq 2NB_{max}-\beta]\textstyle  + \mathbb{P}[X_{c,T_f}^{\pi',inf} \leq \beta]\nonumber\\
  & \textstyle + \mathbb{P}[|X_{d,T_f}^{\pi',sup}| \geq \beta] \leq \delta.
\end{align}
Let $\hat{N}$ denote the optimal value of~(\ref{eq:problem_statement_conservative_1_two_mg_1}), then $N^*\mspace{-4mu}\leq\mspace{-4mu} \hat{N}$.
\end{proposition}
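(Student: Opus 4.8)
The plan is to establish a feasible-set inclusion: every pair $(N,\alpha)$ that is feasible for the conservative problem~\eqref{eq:problem_statement_conservative_1_two_mg_1} is also feasible for~\eqref{eq:problem_statement_two_mg_1}. Since both problems minimize the same objective $N$, an inclusion of the (smaller) conservative feasible set into the larger one forces the optimum over the smaller set to be no smaller, i.e. $N^*\leq\hat N$. Thus the whole argument reduces to showing that the constraint of~\eqref{eq:problem_statement_conservative_1_two_mg_1} implies $P(N,\alpha,\pi')\geq 1-\delta$.

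Write $N_B = NB_{max}$ and let $\mathcal{E}$ be the complementary (bad) event that the path $(X_1^{\pi'}(t),X_2^{\pi'}(t))$ leaves the box $[0,N_B]^2$ for some $t\in[0,T_f]$, so that $1-P(N,\alpha,\pi')=\mathbb{P}[\mathcal{E}]$. First I would pass to the sum and difference coordinates $X_c^{\pi'}=X_1^{\pi'}+X_2^{\pi'}$ and $X_d^{\pi'}=X_2^{\pi'}-X_1^{\pi'}$ of~\eqref{eq:brownian_soc_two_mg_1}, using $X_1^{\pi'}=(X_c^{\pi'}-X_d^{\pi'})/2$ and $X_2^{\pi'}=(X_c^{\pi'}+X_d^{\pi'})/2$. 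The key step is a pathwise sufficient condition for staying inside the box: on the event
\[
\mathcal{G} := \{X_{c,T_f}^{\pi',sup} < 2N_B-\beta\}\cap\{X_{c,T_f}^{\pi',inf} > \beta\}\cap\{\textstyle\sup_{t}|X_d^{\pi'}(t)| < \beta\},
\]
subadditivity of the supremum/infimum gives $\sup_t X_1^{\pi'}(t)\leq\frac{1}{2}(X_{c,T_f}^{\pi',sup}+\sup_t|X_d^{\pi'}(t)|)<N_B$ and $\inf_t X_1^{\pi'}(t)\geq\frac{1}{2}(X_{c,T_f}^{\pi',inf}-\sup_t|X_d^{\pi'}(t)|)>0$, and symmetrically $\sup_t X_2^{\pi'}(t)<N_B$ and $\inf_t X_2^{\pi'}(t)>0$. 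Hence $\mathcal{G}\subseteq\mathcal{E}^c$, and taking complements,
\[
\mathcal{E}\subseteq\{X_{c,T_f}^{\pi',sup}\geq 2N_B-\beta\}\cup\{X_{c,T_f}^{\pi',inf}\leq\beta\}\cup\{\textstyle\sup_{t}|X_d^{\pi'}(t)|\geq\beta\}.
\]

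Applying the union bound then yields $1-P(N,\alpha,\pi')=\mathbb{P}[\mathcal{E}]\leq\mathbb{P}[X_{c,T_f}^{\pi',sup}\geq 2N_B-\beta]+\mathbb{P}[X_{c,T_f}^{\pi',inf}\leq\beta]+\mathbb{P}[\sup_t|X_d^{\pi'}(t)|\geq\beta]$. If $(N,\alpha)$ is feasible for~\eqref{eq:problem_statement_conservative_1_two_mg_1}, the right-hand side is at most $\delta$, so $P(N,\alpha,\pi')\geq 1-\delta$ and $(N,\alpha)$ is feasible for~\eqref{eq:problem_statement_two_mg_1}. The inclusion of feasible sets then delivers $N^*\leq\hat N$.

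I expect the main obstacle to be making the control of the difference process $X_d^{\pi'}$ precise, since the constraint lists only the single term $\mathbb{P}[|X_{d,T_f}^{\pi',sup}|\geq\beta]$, whereas escape from the box is driven by both large positive and large negative excursions of $X_d^{\pi'}$. I would resolve this by reading that term as $\mathbb{P}[\sup_t|X_d^{\pi'}(t)|\geq\beta]$ and, if needed, justifying it by symmetry: $W_1,W_2$ are i.i.d. and the policy~\eqref{eq:optimal_policy_two_mg} is antisymmetric under $X_1\leftrightarrow X_2$, while $X_d^{\pi'}(0)=0$, so $X_d^{\pi'}\stackrel{d}{=}-X_d^{\pi'}$ and the two one-sided excursions contribute equally. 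The coordinate change, the four extremum bounds, the union bound, and the feasible-set inclusion are otherwise routine.
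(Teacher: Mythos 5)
Your proof is correct and takes essentially the same route as the paper: your good-event argument ($\mathcal{G}\subseteq\mathcal{E}^c$, then complement and union bound) is the contrapositive form of the paper's marginalization over $A_d=\{|X_{d,T_f}^{\pi',sup}|<\beta\}$ together with its inclusions $(A_1\cap\hat{A}_d)\subseteq A_c$, $(A_3\cap\hat{A}_d)\subseteq A_c$ and the analogous ones for the infimum events. Your reading of $\mathbb{P}[|X_{d,T_f}^{\pi',sup}|\geq\beta]$ as $\mathbb{P}[\sup_{t}|X_d^{\pi'}(t)|\geq\beta]$ is also exactly the interpretation the paper adopts implicitly when it asserts $A_d\subseteq\hat{A}_d$, so the two proofs match step for step.
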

\begin{proof}
    Please see Appendix~\ref{appendix:conservative_sol_two_mg}.
\end{proof}
\noindent Note, $\mathbb{P}[|X_{d,T_f}^{\pi',sup}| \mspace{-4mu}\geq\mspace{-4mu} \beta]$ is monotonically decreasing in $\beta$. Hence, we choose $\beta$ in such a way that $\mathbb{P}[|X_{d,T_f}^{\pi',sup}| \mspace{-4mu}\geq\mspace{-4mu} \beta] \mspace{-4mu}<\mspace{-4mu} \frac{\delta}{2}$. Let $\delta' \mspace{-4mu}:=\mspace{-4mu} \frac{\delta}{2}$.
Thus, problem~\eqref{eq:problem_statement_conservative_1_two_mg_1} can be simplified as follows:
\begin{align}\label{eq:problem_statement_conservative_1_two_mg_2}
    \min_{\alpha, N} & \textstyle \hspace{0.1in} N \\
  & \text{subject to} \ (\ref{eq:brownian_soc_two_mg_1}),\ \textstyle N \geq 0,\ 0 \leq \alpha \leq 1, \nonumber\\
  & \textstyle \mathbb{P}[X_{c,T_f}^{\pi',sup} \mspace{-4mu}\geq\mspace{-4mu} 2NB_{max}\mspace{-4mu}-\mspace{-4mu}\beta]\textstyle  \mspace{-4mu}+\mspace{-4mu} \mathbb{P}[X_{c,T_f}^{\pi',inf} \leq \beta]\leq \delta'.
\end{align}
Let $(\alpha_c, N_c)$ solve the problem~\eqref{eq:problem_statement_conservative_1_two_mg_2}. Further note that, by definition \cite{shreve2004stochastic}, $(W_1(t)+W_2(t))\sim \sqrt{2}W_c(t)$, where $W_c(t)$ is a Weiner Process, and thus $X_c^{\pi'}(t)\sim \sigma_c W_c(t)+2\alpha NB_{max}$, where $\sigma_c = \sqrt{2}\sigma$. Thus, following the methodology described in Section~\ref{sec:solution_methodology_single_mg}, we can derive the following:
\begin{align}\label{eq:optimal_N_two_mg}
    N_c & \textstyle \leq \frac{\sqrt{8\sigma_c^2 T_f \ln(2/\delta')}}{2B_{max}} + \frac{\beta}{B_{max}},\nonumber\\
    \alpha_c & \textstyle = \frac{1}{2}.
\end{align}
Thus, we provide an upper bound to $N_c$ that solves~\eqref{eq:problem_statement_conservative_1_two_mg_2} which can be utilized by the DNO for practical purposes.

Next, we consider the effect of power line capacity on the battery capacity requirement of the MGs. We consider the following two extreme cases of the power line capacity: (i) $\overline{P}_{12}=0$, and (ii) $\overline{P}_{12}\to \infty$. First, consider the scenario when $\overline{P}_{12} = 0$, that is, the MGs are not connected to each other. In this case, each MG can be treated as a single MG, and the number of BESs required in each can be obtained from Theorem~\ref{theorem:solution_conservative_2_simplified}. Let $\tilde{N}_0$ denote the number of BESs required in each MG in this case. From Theorem~\ref{theorem:solution_conservative_2_simplified} (considering probability tolerance of $(\delta/2)$ for each MG in this case), we have, $\tilde{N}_{0} = c_0\sqrt{T_f}$, where, $c_0 := 2\sqrt{2}\sigma\sqrt{\ln({4/\delta})}/B_{max}$. Next, consider the scenario when the power transfer capacity is infinite. In this case, the combination of the two MGs can be treated as a single MG. Further, by definition \cite{shreve2004stochastic}, we have $(W_1(t)+W_2(t))\sim\sqrt{2}W_c(t) = \sigma_c W_c(t)$, where $W_c(t)$ is a Wiener process, and $\sigma_c=\sqrt{2}$. Here, the total energy available from the BESs, at time $t\in[0,T_f]$, is given by $X_1(t)+X_2(t)=2\alpha NB_{max}+\sigma (W_1(t)+W_2(t))$. Let $\tilde{N}_{\infty}$ denote the required number of BESs in each MG in this case. From Theorem~\ref{theorem:solution_conservative_2_simplified}, we have, $2\tilde{N}_{\infty} = 2\sqrt{2}\sigma_c \sqrt{\ln(2/\delta)}\sqrt{T_f}/B_{max}$, and hence $\tilde{N}_{\infty} = c_{\infty}\sqrt{T_f}$, where $c_{\infty}=2\sigma\sqrt{\ln(2/\delta)}/B_{max}$. Here, note that $\tilde{N}_{0} = \sqrt{\frac{2\ln(4/\delta)}{\ln(2/\delta)}}\tilde{N}_{\infty}$, thus, eventually, the required number of BESs decreases as the power flow capacity is increased up to a factor of $\sqrt{\frac{2\ln(4/\delta)}{\ln(2/\delta)}}$.
\section{Experimental Results}\label{sec:simu_results}
In this section, we demonstrate the applicability and utility of our proposed BES capacity design and power transfer policy determination methodology. First, we consider the case of a single microgrid where a modified IEEE-33 bus distribution network with a total load of $15$ kW is taken as the system under study. A battery, operated in Grid Forming Mode (GFM), is connected to bus $1$ and a wind farm is connected to bus $4$. The parameters, considered, for the real-time simulation, are as follows: $\sigma=1, \delta=0.02, B_{max}=1 \text{ kWh}, T_f=5$ hours. The power network is emulated inside OP5700 real-time simulator box, manufactured by OPAL-RT. The RT simulation is conducted using eMEGASIM-based EMT-simulation platform of RTLAB with a simulation step size of $50\mu s$. Using Theorem~\ref{theorem:solution_conservative_2_simplified}, GFM battery capacity is set to $13$ kWh with initial battery charge set to $6.5$ kWh. The system is run for $5$ hours in real-time. The wind and load variation and the power provided by the GFM battery are shown in Fig.~\ref{fig:power_5_zero_mu}. The battery State-of-the Charge (SoC) is shown in Fig.~\ref{fig:soc_5_zero_mu}. It can be seen that, throughout the time horizon of $5$ hours, the battery SoC neither hits the maximum capacity of $13$ kWh, nor gets completely depleted, establishing the utility of our proposed battery sizing methodology.

\noindent Next, we consider a system of two MGs, connected to each other via a power line. The parameters, considered for the simulation, are as follows: $\sigma=1, \delta=0.02, B_{max}=1 \text{ kWh}, \overline{P}_{12} = 15 \text{ kW}, T_f=5 \text{ hours}$. The time step, $\Delta t$, for the simulation is chosen to be $30$ seconds. Using ~\eqref{eq:optimal_N_two_mg}, the BES capacity of each MG is set to  $10$ kWh. The power transfer policy,~\eqref{eq:optimal_policy_delta_t}, is employed. Fig.~\ref{fig:SoC_two_mg} shows the simulation results for $5000$ runs, where we plot the available energy in the BES of each MG. It is observed that in only $0.4$\% of the total number of realizations, either the SoC of MG-1 battery or MG-2 battery hits the upper limit of $10$ kWh or hits the lower limit of $0$ kWh, which is well within the chosen limit of $2$\% ($\delta\mspace{-4mu}=\mspace{-4mu}0.02$). Next, we empirically analyze the effect of power line capacity, $\overline{P}_{12}$ under our proposed power transfer policy, on the required BES capacity to ensure that both the battery SoC remains within the operating range with a probability of at least $(1-\delta)=0.98$. Here, in Fig.~\ref{fig:P_12_vs_N}, we plot the required BES capacity, averaged over $5000$ runs, as we vary the power line capacity $\overline{P}_{12}$. It is observed that under our proposed power transfer policy, with the increase in line capacity, the MGs can share more power with each other, which in turn reduces the required BES capacity. Thus our proposed BES sizing and power transfer policy can provide a system design guideline for the distribution system operators. 
\begin{figure}[!t]
     \centering
     \subfloat{\includegraphics[scale=0.8,trim={0.1cm 0cm 0cm 0cm},clip]{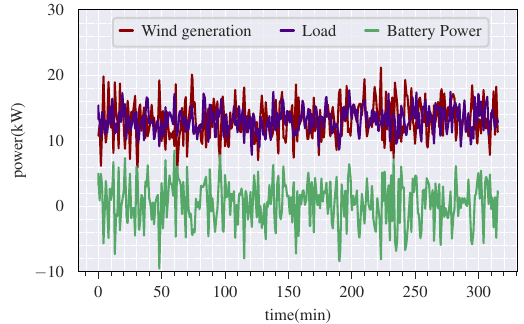}\label{fig:power_5_zero_mu}}\hspace{0.5cm}
     \subfloat{\includegraphics[scale=0.8,trim={0.1cm 0cm 0cm 0cm},clip]{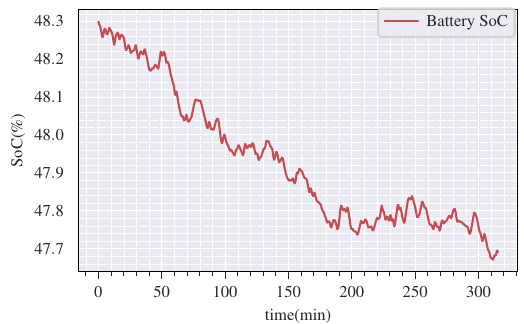}\label{fig:soc_5_zero_mu}}
     \caption{(a) Wind, load variation, and GFM battery power for $5$ hours in real-time in the IEEE-$33$ bus single MG system. (b) The battery does not get fully charged or get depleted throughout the time horizon.}
     \label{fig:simu_case_zero_mu}
     \vspace{-0.1cm}
\end{figure}
\begin{figure}[!t]
     \centering
     \includegraphics[scale=0.8,trim={0.1cm 0cm 0cm 0cm},clip]{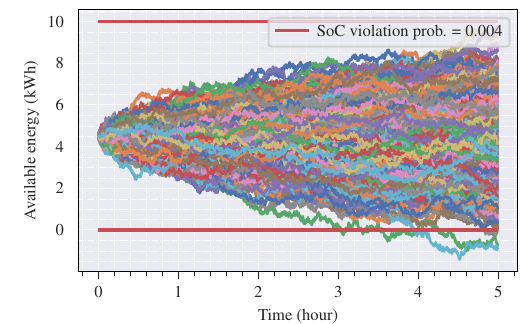}
     \caption{Two MG system: Battery capacity of each MG is $10$ kWh with a power line capacity of $15$ kW. Out of $5000$ realizations of the available energy in the batteries of MG-1 and MG-2, only 0.4\% realizations, hits either the upper limit of $10$ kWh or the lower limit of $0$ kWh.}
     \label{fig:SoC_two_mg}
     \vspace{-0.2cm}
\end{figure}
\begin{figure}[!t]
     \centering
     \includegraphics[scale=0.8,trim={0.1cm 0cm 0cm 0cm},clip]{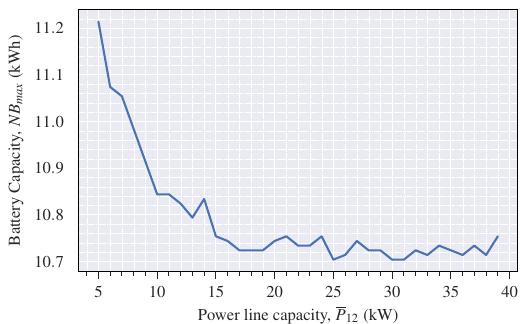}
     \caption{Required BES capacity for each MG decreases, under the optimal power transfer control policy, as power line capacity increases.}
     \label{fig:P_12_vs_N}
     \vspace{-0.5cm}
\end{figure}
\section{Conclusion}\label{sec:conclution}
In this article, we provide a BESs energy capacity determination methodology for microgrids that ensures the BESs do not reach their maximum capacity or get completely depleted throughout a chosen time horizon, with a pre-specified high probability. Such an operation guarantees that in case the RES generation is more than the demand, the BES can be charged and when demand exceeds the RES generation, the BESs can be utilized to supply the loads, thus reducing both generation and load curtailments. We further extend the methodology to a two MG system, where along with the required BES capacity, we provide a power transfer policy between the MGs that reduces the BESs requirement while ensuring the BESs remain within their operating range with high probability. We analytically show that BES requirement is minimized if at any time, the MG having higher available energy in its BESs, transfers maximum power that can flow via the power line, to the other. Further, as the power line capacity in increased the requied BES capacity can be decreased up to a factor of $\sqrt{\frac{2\ln(4/\delta)}{\ln(2/\delta)}}$.
\bibliography{reference}
\clearpage
\appendices
\section{Proof of Lemma~\ref{lem:gaussian_tail_bound}}\label{appendix:gaussian_tail_bound}
We have $X(t)=\sigma W(t) + \alpha NB_{max}$. Recall that for $a\in \mathbb{R}$, $T(a) = \inf\{t \geq 0 : X(t)=a\} = \inf \{t \geq 0: W(t)=\frac{a-\alpha NB_{max}}{\sigma}\}$. We define $\beta := \frac{a-\alpha NB_{max}}{\sigma}$. It follows from the properties of the Weiner process \cite{shreve2004stochastic}, that $T(NB_{max})$ is a stopping time.

\noindent Let $M(t) := e^{c W(t)-\frac{1}{2}c^2t}$ for any $c \neq 0$. Note that, the stochastic process $M(t)$ is a martingale \cite{karatzas2012brownian}. Therefore, for all $t\geq 0$, we have $\mathbb{E}[M(t)]=M(0)=1$. Further, we have $\mathbb{E}[M(T(a))]=1$ \cite{karatzas2012brownian}.

We first consider the case, $a=NB_{max}$. Here, we have,
\begin{align}\label{eq:martingale_soc}
    \textstyle & \mathbb{E}[M(T(NB_{max}))] = 1\nonumber\\
    \textstyle \implies & \mathbb{E}[e^{\frac{c}{\sigma}W(T(NB_{max}))-\frac{1}{2}c^2 T(NB_{max})}]\textstyle = 1\nonumber\\
    \textstyle \implies & \mathbb{E}[e^{\frac{c}{\sigma}X(T(NB_{max}))-\frac{1}{2}c^2T(NB_{max})}]\textstyle = e^{\frac{c \alpha N B_{max}}{\sigma}}.
\end{align}
Note that, in the case of $a=NB_{max}$, we have, $\beta=\frac{NB_{max}(1-\alpha)}{\sigma} \geq 0$. Therefore, we choose a positive constant $c$. Let $c=\sqrt{2s}$ for some $s \in \mathbb{R}^+$. Then, from~(\ref{eq:martingale_soc}), we have
\begin{align*}
    \textstyle & \mathbb{E}[e^{\frac{c}{\sigma}X(T(NB_{max}))-sT(NB_{max})}] = e^{\frac{c \alpha N B_{max}}{\sigma}}.
\end{align*}
By definition of $T(NB_{max})$, we obtain,
\begin{align*}
    \textstyle & \mathbb{E}[e^{\frac{c NB_{max}}{\sigma}-sT(NB_{max})}]=e^{\frac{c \alpha NB_{max}}{\sigma}}\nonumber\\
    & \textstyle \implies \mathbb{E}[e^{-sT(NB_{max})}] = e^{-\frac{\sqrt{2s}NB_{max}(1-\alpha)}{\sigma}}.
\end{align*}
Further, note that for any $s>0$, we have, $\mathbb{P}[T(NB_{max}) \leq T_f] = \mathbb{P}[e^{-sT(NB_{max})} \geq e^{-sT_f}]$. Thus, applying Markov's inequality we get,
\begin{align*}
    \textstyle \mathbb{P}[T(NB_{max}) & \textstyle \leq T_f] = \mathbb{P}[e^{-sT(NB_{max})} \geq e^{-sT_f}]\\
    & \textstyle \leq e^{sT_f}\mathbb{E}[e^{-sT(NB_{max})}]\\
    & \textstyle = e^{-\frac{\sqrt{2s}NB_{max}(1-\alpha)}{\sigma}+sT_f}
\end{align*}
Thus,
\begin{align}\label{eq:gaussian_tail_upper}
    \textstyle \mathbb{P}[T(NB_{max}) & \textstyle \leq T_f]\textstyle \leq \inf_{s>0} e^{-\frac{\sqrt{2s}NB_{max}(1-\alpha)}{\sigma}+sT_f}.
\end{align}
It can be shown that for any given $N\geq 0$ and $\alpha \in [0,1]$, the function $e^{-\frac{\sqrt{2s}NB_{max}(1-\alpha)}{\sigma}+sT_f}$ reaches global minima at $s^*=\frac{N^2B_{max}^2(1-\alpha)^2}{2\sigma^2 T_f}$. Plugging in the expression of $s^*$ in~(\ref{eq:gaussian_tail_upper}), we obtain the following:
\begin{align}\label{eq:gaussian_tail_upper_final}
    \textstyle \mathbb{P}[T(NB_{max}) & \textstyle \leq T_f]\textstyle \leq e^{-\frac{N^2 B_{max}^2 (1-\alpha)^2}{2\sigma^2 T_f}}.
\end{align}
Carrying out a similar procedure for the case $a=0$, to obtain an upper bound for $\mathbb{P}[T(0) \leq T_f]$, we can derive the following:
\begin{align}\label{eq:gaussian_tail_lower_final}
    \textstyle \mathbb{P}[T(0) & \textstyle \leq T_f] \leq e^{-\frac{\alpha^2 N^2 B_{max}^2}{2\sigma^2 T_f}}.
\end{align}
Combining both~(\ref{eq:gaussian_tail_upper_final}) and~(\ref{eq:gaussian_tail_lower_final}), we obtain the desired result. This completes the proof.
\section{Proof of Theorem~\ref{theorem:solution_conservative_2_simplified}}\label{appendix:solution_conservative_2_simplified}
Recall that 
\[f(N, \alpha) := e^{-\frac{N^2B_{max}^2(1-\alpha)^2}{2\sigma^2 T_f}} + e^{-\frac{N^2B_{max}^2\alpha^2}{2\sigma^2 T_f}}.\]
For $\alpha=\frac{1}{2}$, we have $f(N,\frac{1}{2})=2e^{-\frac{N^2B_{max}^2}{8\sigma^2 T_f}}$.
Note that, $f(N,\frac{1}{2})$ is strictly monotonic decreasing in $N$, for $N > 0$. Thus, for $\alpha=\frac{1}{2}$, the solution to~(\ref{eq:problem_statement_conservative_2_1}) is obtained when the constraint $f(N,\frac{1}{2})\leq \delta $ is an equality. Solving $f(N,\frac{1}{2})=\delta$  we obtain $N(\frac{1}{2})=N_o=\frac{\sqrt{8\sigma^2 T_f \ln(2/\delta)}}{B_{max}}$.

We define $k:=\frac{2}{\delta}$. Thus, for any $\alpha \in (0,1)$, we have,
\begin{align*}
    \textstyle f(N_0, \alpha) = \textstyle e^{-4\ln(k)(1-\alpha)^2} + e^{-4\ln(k)\alpha^2}.
\end{align*}
We first prove that $f(N_0,\alpha)$ achieves global minima at $\alpha=\frac{1}{2}$. Differentiating with respect to $\alpha$, we have,
\begin{align}\label{eq:partial_f_alpha}
    \hspace{-0.3cm} \textstyle \frac{\partial f(N_0,\alpha)}{\partial \alpha} & \mspace{-4mu}=\mspace{-4mu} \textstyle 8(1\mspace{-4mu}-\mspace{-4mu}\alpha)\ln(k)(k)^{-4(1-\alpha)^2}\mspace{-8mu}-\mspace{-4mu}8\alpha \ln(k)(k)^{-4\alpha^2}.
\end{align}
Note, for any $\alpha \in (0,1)$ that satisfies $\frac{\partial f(N_0,\alpha)}{\partial \alpha}=0$, we have the following:
\begin{align}\label{eq:partial_f_alpha_simplified}
    \textstyle (k)^{4(1-2\alpha)} &= \textstyle \frac{1-\alpha}{\alpha}.
\end{align} 
Note that $k=\frac{2}{\delta}\geq 2.$ Clearly, $\alpha=\frac{1}{2}$ satisfies~(\ref{eq:partial_f_alpha_simplified}). Double differentiation of $f(N_0,\alpha)$ with respect to $\alpha$ leads to the following:
\begin{align*}
    \textstyle \frac{\partial^2 f(N_0,\alpha)}{\partial \alpha^2} & \mspace{-4mu}=\mspace{-4mu} \textstyle (8(1\mspace{-4mu}-\mspace{-4mu}\alpha)\ln(k))^2(k)^{-4(1\mspace{-2mu}-\mspace{-2mu}\alpha)^2}\mspace{-8mu}\\
    & \textstyle -\mspace{-4mu}8\ln(k)(k)^{-4(1\mspace{-2mu}-\mspace{-2mu}\alpha)^2}\\
    & \textstyle +(8\alpha \ln(k))^2(k)^{-4\alpha^2}-8\ln(k)(k)^{-4\alpha^2}.
\end{align*}
For any $\alpha$ that satisfies $\frac{\partial f(N_0,\alpha)}{\partial \alpha}=0$, from~(\ref{eq:partial_f_alpha}), we get,
\begin{align}\label{eq:partial_f_alpha_double}
    & \textstyle \frac{\partial^2 f(N_0,\alpha)}{\partial \alpha^2} \mspace{-4mu}=\mspace{-4mu} \textstyle (8\alpha \ln(k))^2(k)^{8(1-\alpha)^2-8\alpha^2}(k)^{-4(1-\alpha)^2}\nonumber\\
    & \textstyle -8\ln(k)(k)^{-4(1-\alpha)^2}+(8\alpha \ln(k))^2(k)^{-4\alpha^2}-8\ln(k)(k)^{-4\alpha^2}\nonumber\\
    &= \textstyle (8\alpha\ln(k))^2(k)^{-4\alpha^2}(1+(k)^{4(1-\alpha)^2-4\alpha^2})\nonumber\\
    & - \textstyle 8 \ln(k)(k)^{-4\alpha^2}(1+(k)^{4\alpha^2-4(1-\alpha)^2})\nonumber\\
    &= \textstyle (8\alpha\ln(k))^2(k)^{-4\alpha^2}(1+(k)^{4(1-2\alpha)})\nonumber\\
    & \textstyle -8 \ln(k)(k)^{-4\alpha^2}(1+(k)^{4(2\alpha-1)})\nonumber\\
    & = \textstyle (8\alpha\ln(k))^2(k)^{-4\alpha^2}(1+\frac{1-\alpha}{\alpha})-8 \ln(k)(k)^{-4\alpha^2}(1+\frac{\alpha}{1-\alpha})\nonumber\\
    &= \textstyle (8\ln(k))^2(k)^{-4\alpha^2}\alpha-8 \ln(k)(k)^{-4\alpha^2}(\frac{1}{1-\alpha})\nonumber\\
    &= \textstyle 8\ln(k)(k)^{-4\alpha^2}(8\ln(k)\alpha-\frac{1}{1-\alpha}).
\end{align}
Note that, for $\alpha=\frac{1}{2}$,~(\ref{eq:partial_f_alpha_double}) can be simplified to $16\ln(k)(\frac{1}{k})(2\ln(k)-1)$ which is positive, since $k=\frac{2}{\delta} > 2$. Thus, $f(N_0,\alpha)$ achieves minima at $\alpha=\frac{1}{2}$.

Next, we show that $f(N_0,\alpha)$ achieves global minima at $\alpha=\frac{1}{2}$, that is for $\alpha \neq \frac{1}{2},\alpha \in (0,1)$, we have $\frac{\partial^2 f(N_0,\alpha)}{\partial \alpha^2} < 0$, or equivalently, from~\eqref{eq:partial_f_alpha_double}, $(8\ln(k)\alpha-\frac{1}{1-\alpha}) < 0$. 
We first define $g(\alpha) := 2\ln(\frac{1}{\alpha})-2\ln(\frac{1}{1-\alpha})-\frac{1}{\alpha}+\frac{1}{1-\alpha}$. Here we get, $\frac{\partial g(\alpha)}{\partial \alpha}=(1-2\alpha)(\frac{1}{\alpha^2}-\frac{1}{(1-\alpha)^2})$.
In the region, $\alpha \in (0,\frac{1}{2})$, we have, $0<\frac{1}{1-\alpha}<\frac{1}{\alpha}$ and $(1-2\alpha)>0$. Thus, $\frac{\partial g(\alpha)}{\partial \alpha}>0$, hence, $g(\alpha)$ is strictly increasing in $\alpha\in(0,\frac{1}{2})$. Similarly, in the region $\alpha\in (\frac{1}{2},1)$, we have $(1-2\alpha)<0$, and $0<\frac{1}{\alpha}<\frac{1}{1-\alpha}$. Thus, $\frac{\partial g(\alpha)}{\partial \alpha}>0$, hence $g(\alpha)$ is increasing in $\alpha \in (\frac{1}{2},1)$. Since, $g(\alpha)$ is a continuous function of $\alpha\in(0,1)$, we have
\begin{align}\label{eq:partial_g_alpha}
    \textstyle g(\alpha)& \textstyle <g(\frac{1}{2})=0, \ \alpha \in (0,\frac{1}{2}),\nonumber\\
    \textstyle g(\alpha) & \textstyle > g(\frac{1}{2})=0,\ \alpha \in (\frac{1}{2},1).
\end{align}
Next, we will show that $(8\ln(k)\alpha - \frac{1}{1-\alpha})<0, \alpha\in(0,1)\setminus \{\frac{1}{2}\}$. First consider the case $\alpha \in (0,\frac{1}{2})$. From~\eqref{eq:partial_f_alpha_simplified}, we have $8\ln(k)=\frac{2}{1-2\alpha}\ln(\frac{1-\alpha}{\alpha})$, and since $\alpha\in(0,\frac{1}{2})$, we have $(1-2\alpha)>0$. Now, suppose, $(8\ln(k)\alpha-\frac{1}{1-\alpha}) \geq 0$. Here we have,
\begin{align*}
    & \textstyle 8\ln(k)\alpha-\frac{1}{1-\alpha} \geq 0\\
    \implies & \textstyle \frac{2\alpha}{1-2\alpha}\ln(\frac{1-\alpha}{\alpha})-\frac{1}{1-\alpha}\geq 0\\
    \implies & \textstyle 2\ln(\frac{1-\alpha}{\alpha})-\frac{1-2\alpha}{\alpha(1-\alpha)}\geq 0\\
    \implies & \textstyle 2\ln(\frac{1}{\alpha})-2\ln(\frac{1}{1-\alpha})-\frac{1}{\alpha}+\frac{1}{1-\alpha}\geq 0\\
    \implies & \textstyle g(\alpha)\geq 0.
\end{align*}
However, from~\eqref{eq:partial_g_alpha}, we have $g(\alpha)<0$ when $\alpha\in(0,\frac{1}{2})$, and thus, $(8\ln(k)\alpha-\frac{1}{1-\alpha})<0$.

Next, consider the region $\alpha\in(\frac{1}{2},1)$. Here, we have $(1-2\alpha)<0$. Suppose $(8\ln(k)\alpha-\frac{1}{1-\alpha})\geq 0$. Here, we have,
\begin{align*}
    & \textstyle 8\ln(k)\alpha-\frac{1}{1-\alpha} \geq 0\\
    \implies & \textstyle \frac{2\alpha}{1-2\alpha}\ln(\frac{1-\alpha}{\alpha})-\frac{1}{1-\alpha}\geq 0\\
    \implies & \textstyle 2\ln(\frac{1-\alpha}{\alpha})-\frac{1-2\alpha}{\alpha(1-\alpha)}\leq 0\\
    \implies & \textstyle 2\ln(\frac{1}{\alpha})-2\ln(\frac{1}{1-\alpha})-\frac{1}{\alpha}+\frac{1}{1-\alpha}\leq 0\\
    \implies & \textstyle g(\alpha)\leq 0.
\end{align*}
However, from~\eqref{eq:partial_g_alpha}, we have $g(\alpha)>0$ when $\alpha\in(\frac{1}{2},1)$, and thus, $(8\ln(k)\alpha-\frac{1}{1-\alpha})<0$.
Therefore, for any $\alpha\in (0,1)\setminus \{\frac{1}{2}\}$ that satisfies~(\ref{eq:partial_f_alpha_simplified}), $\frac{\partial^2f(N_0,\alpha)}{\partial \alpha^2} < 0$. Thus, $f(N_0,\alpha)$ achieves global minima at $\alpha=\frac{1}{2}$. Hence, for any $\alpha \neq \frac{1}{2}$, we have $f(N_0,\alpha) > f(N_0,\frac{1}{2})=2e^{-\ln(k)}=\delta$. Clearly, for an $\alpha\in (0,1)\setminus \{\frac{1}{2}\}$, the tuple $(\alpha, N_o)$ is not a feasible solution to~\eqref{eq:problem_statement_conservative_2_1}. Further, note that, $f(N,\alpha)$ is strictly monotonic decreasing in $N$. Thus for any $\alpha\in (0,1)\setminus \{\frac{1}{2}\}$, $M\leq N_o,\ f(M,\alpha)\geq f(N_o,\alpha)> \delta$; thus such a $M$ remains infeasible for $\alpha\not=\frac{1}{2}.$  Hence, for any $\alpha\in (0,1)\setminus \{\frac{1}{2}\}$, we must have $N(\alpha) \geq N(\frac{1}{2})$.
This completes the proof.
\section{Proof of Theorem~\ref{theorem:problem_statement_two_mg_dual}}\label{appendix:problem_statement_two_mg_dual}
Note that as $(N^*,\alpha^*,\pi^*)$ solves~\eqref{eq:problem_statement_two_mg}, we have $(N^*,\alpha^*,\pi^*)\mspace{-4mu}\in\mspace{-4mu} S_1 \mspace{-4mu}\subseteq\mspace{-4mu} F_1$, and thus~\eqref{eq:brownian_soc_two_mg} is satisfied, $\pi^* \in \Pi$, and $P(N^*,\alpha^*,\pi^*)\geq 1-\delta$. Further, as $N^*\leq N^*$, it follows that $(N^*,\pi^*)\mspace{-4mu}\in\mspace{-4mu} F_2$ as well and thus $P(N^*,\pi^*,\alpha^*)\leq \nu$, as $\nu$ is the optimal value of the maximization problem~\eqref{eq:problem_statement_two_mg_dual}. Further, as 
$(N',\pi')$ solves~\eqref{eq:problem_statement_two_mg_dual}, we have $P(N',\alpha^*,\pi')=\nu\geq P(N^*,\alpha^*,\pi^*)\geq 1-\delta$. Thus, in summary, we have $P(N',\alpha^*,\pi')\geq 1-\delta$ and as $(N',\pi')\mspace{-4mu}\in\mspace{-4mu} S_2\mspace{-4mu}\subseteq\mspace{-4mu}F_2$ $\pi'\in \Pi, N'\leq N^*$, and~\eqref{eq:brownian_soc_two_mg} is satisfied. Hence, $(N',\alpha^*,\pi')\mspace{-4mu}\in\mspace{-4mu}F_1$ as well.

\noindent Note that, as~\eqref{eq:problem_statement_two_mg} is a minimization problem, it follows that $N'\geq N^*$. However, as $(N',\pi')\mspace{-4mu}\in\mspace{-4mu}S_2\mspace{-4mu}\subseteq\mspace{-4mu}F_2$, we get $N'\leq N^*$. Thus $N'=N^*$.

\noindent We have established that, $(N',\alpha^*,\pi')\in F_1$, and $N'=N^*$, thus $(N',\alpha,\pi')$ solves~\eqref{eq:problem_statement_two_mg} as well, that is $(N',\alpha^*,\pi')\in S_1$. This completes the proof.
\section{Proof of Proposition~\ref{prop:conservative_sol_two_mg}}\label{appendix:conservative_sol_two_mg}
Let $A_1 := \{X_{1,T_f}^{\pi',sup}(t) \geq NB_{max}\}, A_2 := \{X_{1,T_f}^{\pi',inf}(t)\leq 0\}, A_3:= \{X_{2,T_f}^{\pi',sup}(t) \geq NB_{max}\}$ and $A_4 := \{X_{2,T_f}^{\pi',inf}(t)\leq 0\}$. Therefore, the chance-constraint of the problem~(\ref{eq:problem_statement_two_mg_1}) can be equivalently written as:
\begin{align*}
    \textstyle & \textstyle \mathbb{P}[A_1 \cup A_2 \cup A_3 \cup A_4] \leq \delta.
\end{align*}
In the subsequent paragraphs, we establish that a feasible solution to~(\ref{eq:problem_statement_conservative_1_two_mg_1}) is also a feasible solution to~(\ref{eq:problem_statement_two_mg_1}).

Consider the given constant $\beta$, and let $A_d := \{|X_{d,T_f}^{\pi',sup}| < \beta$ and it's complement, $A_d' := \{|X_{d,T_f}^{\pi',sup}| \geq \beta\}$. Marginalizing over the event sets $A_d$ and $A_d'$, we derive the following:
\begin{align}\label{eq:prob_marginalization_two_mg}
    \textstyle \mathbb{P}[A_1 \cup A_2 \cup A_3 \cup A_4]& \textstyle= \textstyle \mathbb{P}[(A_1 \mspace{-2mu}\cup\mspace{-2mu} A_2 \mspace{-2mu}\cup\mspace{-2mu} A_3 \mspace{-2mu}\cup\mspace{-2mu} A_4) \mspace{-2mu}\cap\mspace{-2mu} A_d] \nonumber\\
    & \textstyle \mspace{-4mu}+\mspace{-4mu} \mathbb{P}[(A_1 \mspace{-2mu}\cup\mspace{-2mu} A_2 \mspace{-2mu}\cup\mspace{-2mu} A_3 \mspace{-2mu}\cup\mspace{-2mu} A_4) \mspace{-2mu}\cap\mspace{-2mu} A_d'].
\end{align}
First, we focus on the first term on the right-hand side of~(\ref{eq:prob_marginalization_two_mg}), and derive the following:
\begin{align}\label{eq:prob_marginalization_two_mg_first}
    & \textstyle \mathbb{P}[(A_1 \cup A_2 \cup A_3 \cup A_4) \cap A_d]\nonumber\\
    & \textstyle \leq \mathbb{P}[(A_1 \cup A_3) \cap A_d]\textstyle + \mathbb{P}[(A_2 \cup A_4) \cap A_d].
\end{align}
Now, let $\hat{A}_d := \{|X_{d}^{\pi'}(t)|< \beta, \forall t\in[0,T_f]\}$. As, $|X_{d,T_f}^{\pi',sup}|< \beta \implies |X_{d}^{\pi'}(t)|< \beta$, for all $t \in [0,T_f]$, we have, $\mathbb{P}[A_d] \leq \mathbb{P}[\hat{A}_d]$. Thus, $\mathbb{P}[(A_1 \cup A_3) \cap A_d] \leq \mathbb{P}[(A_1 \cup A_3) \cap \hat{A}_d] = \mathbb{P}[(A_1 \cap \hat{A}_d) \cup (A_3 \cap \hat{A}_d)]$. Now, let $A_c := \{X_{c,T_f}^{\pi',sup} \geq 2NB_{max}-\beta\}$. It can be shown that $(A_1 \cap \hat{A}_d) \subseteq A_c$ and $(A_3 \cap \hat{A}_d) \subseteq A_c$, which leads to $\mathbb{P}[(A_1 \cap \hat{A}_d) \cup (A_3 \cap \hat{A}_d)] \leq \mathbb{P}[A_c]$. Thus, we have
\begin{align}\label{eq:prob_marginalization_two_mg_first_1}
    & \textstyle \mathbb{P}[(A_1 \cup A_3) \cap A_d] \leq \mathbb{P}[A_c].
\end{align}
In a similar manner, we define $\tilde{A}_c := \{X_{c,T_f}^{\pi',inf} < \beta\}$, and derive the following:
\begin{align}\label{eq:prob_marginalization_two_mg_first_2}
    & \textstyle \mathbb{P}[(A_2 \cup A_4) \cap A_d] \leq \mathbb{P}[\tilde{A}_c].
\end{align}
Hence, from~(\ref{eq:prob_marginalization_two_mg_first}) and combining~(\ref{eq:prob_marginalization_two_mg_first_1}) and~(\ref{eq:prob_marginalization_two_mg_first_2}), we have
\begin{align}\label{eq:prob_marginalization_two_mg_first_upper_bound}
    & \textstyle \mathbb{P}[(A_1 \mspace{-2mu}\cup\mspace{-2mu} A_2 \mspace{-2mu}\cup\mspace{-2mu} A_3 \mspace{-2mu}\cup\mspace{-2mu} A_4) \mspace{-2mu}\cap\mspace{-2mu} A_d] \textstyle \leq \mathbb{P}[A_c] \textstyle + \mathbb{P}[\tilde{A}_c].
\end{align}
Thus, from~(\ref{eq:prob_marginalization_two_mg_first_upper_bound}) and~(\ref{eq:prob_marginalization_two_mg}), we have
\begin{align}\label{eq:prob_marginalization_two_mg_upper_bound}
    & \textstyle \mathbb{P}[A_1 \cup A_2 \cup A_3 \cup A_4] \nonumber\\
    & \textstyle \leq \mathbb{P}[A_c] \textstyle+ \mathbb{P}[\tilde{A}_c] \textstyle + \mathbb\mathbb{P}[(A_1 \cup A_2 \cup A_3 \cup A_4)\cap A_d']\nonumber\\
    & \textstyle \leq \mathbb{P}[A_c] \textstyle+ \mathbb{P}[\tilde{A}_c] \textstyle + \mathbb\mathbb{P}[A_d'].
\end{align}
Note that the chance-constraint of the problem~(\ref{eq:problem_statement_conservative_1_two_mg_1}) can be written as $\mathbb{P}[A_c]+\mathbb{P}[\tilde{A}_c]+\mathbb\mathbb{P}[A_d'] \leq \delta$.
Thus, we establish that a feasible solution to~(\ref{eq:problem_statement_conservative_1_two_mg_1}) is also a feasible solution to~(\ref{eq:problem_statement_two_mg_1}). Thus $N^* \leq \hat{N}$. This completes the proof.
\end{document}